\documentclass[10pt]{article}
\usepackage[preprint]{tmlr}

\usepackage{amsmath,amssymb,amsthm,bm}

\usepackage{booktabs,longtable,array,caption,subcaption,multirow}
\usepackage{graphicx,float}
\usepackage{adjustbox}
\usepackage{makecell}
\usepackage{xcolor}

\usepackage{enumitem}
\usepackage{hyperref}
\hypersetup{hidelinks}
\usepackage{url}
\usepackage[nameinlink,noabbrev]{cleveref}

\newtheorem{theorem}{Theorem}
\newtheorem{proposition}{Proposition}

\newtheorem{corollary}{Corollary}
\theoremstyle{definition}

\newtheorem{assumption}{Assumption}
\theoremstyle{remark}
\newtheorem{remark}{Remark}

\DeclareMathOperator{\diag}{diag}
\DeclareMathOperator*{\argmin}{arg\,min}
\newcommand{\E}{\mathbb{E}}
\newcommand{\Prob}{\mathbb{P}}
\newcommand{\R}{\mathbb{R}}
\newcommand{\muT}{\hat{\mu}_T}
\newcommand{\muC}{\hat{\mu}_C}
\newcommand{\SigT}{\hat{\Sigma}_T}
\newcommand{\taucrm}{\hat{\tau}_{\mathrm{CRM}}}
\newcommand{\tauatt}{\tau_{\mathrm{ATT}}}
\newcommand{\tauS}{\tau_S}

\def\month{08}
\def\year{2026}
\def\openreview{\url{https://openreview.net/forum?id=z74epfCe3A}}

\title{\textbf{Centroid-Referenced Mahalanobis Matching (CRM):}\\
A Scalable, Representation-Based Framework for\\
Causal Inference in Large Observational Studies}

\author{
\name Keming Hu \email keming.hu@target.com \\
\addr Target Corporation \\
\AND
\name Yingpei He \email yingpei.he@target.com \\
\addr Target Corporation \\
}

\begin{document}
\maketitle

\begin{abstract}
Matching for causal inference can be computationally expensive at scale and
can silently change the target population when overlap is limited. We propose
\textbf{Centroid-Referenced Mahalanobis Matching (CRM)}, which replaces global
pairwise search with stratified sampling in two reference coordinates: each
unit's Mahalanobis distance from the treated centroid and its Fisher coordinate
along the treated--control mean shift. All covariates enter through the treated
covariance geometry; CRM is therefore not principal-component preprocessing
followed by nearest-neighbor matching. For $n$ units and $p$ pretreatment
covariates, its implemented cost is $O(np^2+p^3+n\log n)$, simplifying to
$O(np^2+n\log n)$ when $n\geq p$.

We derive an error decomposition separating representation, support,
discretization, and stochastic components. A pre-matching shortage fraction
$\hat\pi$ estimates the population support restriction $\pi$, which enters a
gap bound under bounded treatment-effect heterogeneity. Final retention is
reported separately for capacity-driven exclusions. Under representation
sufficiency, smoothness, and adequate cell capacity,
CRM has a conservative two-dimensional histogram mean-squared-error (MSE) bound
$O(n_T^{-1/2})$; representation sufficiency is an additional assumption, not
a consequence of ignorability given the original covariates.

On Criteo, CRM retains at least $99.4\%$ of treated units, has lower MaxSMD than
corrected propensity-score matching in $31$ of $36$ large-scale configurations,
and is roughly an order of magnitude faster. Moderate-size simulations favor
some pairwise and weighting baselines on balance, locating CRM's contribution in
scalability and explicit support diagnostics rather than universal finite-sample
dominance.
\end{abstract}

\noindent\textbf{Keywords:} causal inference; observational studies;
Mahalanobis matching; Fisher discriminant; covariate balance; scalable methods

\section{Introduction}\label{sec:intro}

Observational analyses increasingly confront a practical barrier: the
datasets are too large for the methods, or the methods obscure the
inferential scope of the results.
Brute-force nearest-neighbor matching over $n_T$ treated and $n_C$ control
units requires $O(n_T n_C)$ pairwise distance evaluations
\citep{abadie2006,stuart2010}.
At modern administrative or platform-data scales, $n_T$ may be hundreds of
thousands and $n_C$ ten times larger. This cost is often prohibitive,
forcing practitioners either to subsample, approximate the search, or
abandon matching entirely in favor of regression-based alternatives.
Scalar-score matchers can reduce this cost by sorting, so computational
comparisons depend on the matcher and implementation used.

A second challenge is more subtle but equally consequential.
When treated and control covariate distributions do not overlap, every
matching procedure implicitly restricts inference to a matched subset of
treated units.
This restriction changes the target from the average treatment effect on
the treated ($\tauatt$) to a conditional average treatment effect
$\tauS$ defined on the treated population supported by the matching design
\citep{crump2009}.
Yet standard practice rarely makes this shift explicit: the caliper in
propensity score matching \citep[PSM;][]{rosenbaum1985}, the tolerance in
coarsened exact matching \citep[CEM;][]{iacus2012}, and the iteration count
in FLAME \citep{wang2021} all
silently determine who is excluded and therefore what quantity is being estimated.

This paper introduces \textbf{Centroid-Referenced Mahalanobis Matching
(CRM)}, a design-based framework that turns matching from a pairwise
unit-search problem into distributional alignment around the treated
covariate distribution. For each unit, CRM computes
$Z(X)=(d(X),\phi(X))$, where $d$ is Mahalanobis distance from the treated
centroid and $\phi$ is the Fisher direction separating treated and control
centroids in whitened covariate space. Controls are then sampled within
cells of the $(d,\phi)$ grid to reproduce the treated distribution. The
construction uses all covariates through the covariance geometry; the
two-dimensional representation is a reference coordinate system for
matching, not a low-rank replacement for the covariates followed by
nearest-neighbor matching.

The main theoretical results are an estimation-error decomposition
(\Cref{thm:bias}), an ATT gap bound under limited overlap
(\Cref{cor:gap-bound}), an $O(n_T^{-1/2})$ mean-squared-error bound in fixed
representation dimension (Proposition~\ref{prop:rate}), and
$O(np^2+p^3+n\log n)$ implemented complexity, simplifying to
$O(np^2+n\log n)$ when $n\geq p$ (Proposition~\ref{prop:complexity}). Empirically, CRM is not a
universal balance-dominance method: at moderate sample sizes, corrected PSM
or weighting methods can achieve tighter marginal balance. CRM is designed
for the complementary regime where pairwise search is costly and explicit
overlap accounting is important.

\medskip
The remainder of the paper is organized as follows.
\Cref{sec:litreview} positions CRM relative to matching, balancing scores,
and dimension-reduction baselines. \Cref{sec:setup,sec:method} define the
estimand and algorithm. \Cref{sec:theory} gives the main theoretical
properties, and \Cref{sec:simulation,sec:criteo,sec:nhanes} report the
empirical results.

\section{Related Work}\label{sec:litreview}

\subsection{Matching, Balance, and Scalability}

Matching is a design-stage strategy for making treated and control groups
more comparable before outcome modeling \citep{cochran1973,rubin1973,rubin1979,ho2007,stuart2010}.
Propensity-score methods use the balancing-score result of
\citet{rosenbaum1983} to reduce adjustment to the scalar score
$e(X)=\Prob(T=1\mid X)$, while nearest-neighbor matching has well-studied
large-sample behavior \citep{abadie2006,abadie2016}.
Other methods enforce balance more directly: CEM coarsens covariates and
matches within cells \citep{iacus2011,iacus2012}, entropy balancing matches
moments by reweighting controls \citep{hainmueller2012}, and cardinality,
optimal, or full matching solve constrained assignment problems
\citep{rosenbaum1989,hansen2004,hansen2006,zubizarreta2014}.
These methods are powerful in moderate samples, but pairwise or
combinatorial search becomes expensive in administrative and platform-scale
data.

\subsection{Learning the Matching Geometry}

Genetic Matching, MALTS, variable-importance matching, and FLAME learn or
select covariate structure to improve balance
\citep{diamond2013,wang2021,parikh2022,lanners2023}.
CRM takes a different route: the matching geometry is fixed by the treated
centroid, the treated covariance, and the Fisher direction separating
treated from control centroids. This makes the method less flexible than
learned metrics under nonlinear treatment assignment, but it avoids
outcome-dependent metric learning and eliminates global pairwise search.
The Criteo experiment illustrates why this distinction matters: methods can
achieve very low marginal imbalance while still producing large estimation
error when the matched representation does not preserve outcome-relevant
structure.

\subsection{Projection and Dimension-Reduction Baselines}\label{sec:dr-relation}

CRM is closest in appearance to ``compress-then-match'' approaches such as
principal component analysis (PCA) matching or sufficient dimension
reduction (SDR)
\citep{jolliffe2002,li1991,luo2020,brown2021pcamatchr},
but the algorithmic object is different. PCA and SDR first construct a
lower-dimensional covariate representation and then typically run an
ordinary pairwise matcher in that representation. CRM instead constructs
reference-based coordinates for every unit relative to the treated
distribution: a radial Mahalanobis coordinate $d$ and a supervised Fisher
coordinate $\phi$. Matching then occurs by stratified sampling within
$(d,\phi)$ cells. Thus the contrast with PCA is not only supervised versus
unsupervised projection; it is reference geometry plus cell-level
distributional matching versus low-dimensional preprocessing plus
nearest-neighbor search. \Cref{sec:pca-empirical} reports the corresponding
empirical comparison.

The choice of $(d,\phi)$ is deliberate. The radial coordinate is not meant
to declare two units with the same distance to be nearest neighbors in the
original Euclidean space; it stratifies units by how far they lie from the
treated reference distribution under the treated covariance geometry. The
Fisher coordinate then adds the signed direction of the treated--control
mean shift. For a binary treatment, classical linear discriminant analysis
(LDA) has at most one nonzero discriminant direction, so a ``second LDA
coordinate'' would have to come from another construction, such as PCA, SDR,
clustering, or a multi-centroid extension. Those extensions are possible,
but they move CRM toward a higher-dimensional grid and reintroduce tuning
over the number of directions.
The base method uses one radial and one mean-shift coordinate to retain a
two-dimensional matching grid.

\section{Setup and Notation}\label{sec:setup}

Let $\{(X_i, T_i, Y_i)\}_{i=1}^n$ denote $n$ independent observations,
where $X_i \in \R^p$ is a vector of pretreatment covariates, $T_i\in\{0,1\}$
is treatment assignment, and $Y_i \in \R$ is the observed outcome.
Let $n_T = \sum_i T_i$ and $n_C = n - n_T$.

\begin{assumption}[Stable Unit Treatment Value Assumption (SUTVA)]\label{ass:sutva}
$Y_i = T_i Y_i(1) + (1-T_i) Y_i(0)$, where $Y_i(0)$ and $Y_i(1)$
are the potential outcomes under control and treatment, respectively.
\end{assumption}

\begin{assumption}[Ignorability]\label{ass:ignor}
$(Y_i(0), Y_i(1)) \perp T_i \mid X_i$.
\end{assumption}

\begin{assumption}[Full-covariate overlap]\label{ass:xoverlap}
There exists $\eta>0$ such that
$\eta < \Prob(T_i=1\mid X_i=x) < 1-\eta$ for almost every $x$ in the
covariate support.
\end{assumption}

Under the potential-outcome notation, this paper focuses on the average
treatment effect on the treated,
\begin{equation}
\tauatt = \E[Y(1)-Y(0)\mid T=1].
\label{eq:att}
\end{equation}
The ATT is the natural target for one-to-one and stratified matching because
the treated sample defines the target population and controls supply
counterfactual outcomes. Extensions to average treatment effects (ATEs) are
possible by changing the reference distribution, but the present paper keeps
the target fixed to avoid mixing estimands.
The treated centroid is $\muT = n_T^{-1}\sum_{i:T_i=1} X_i$.
The treated covariance matrix with small-ridge regularization is
\begin{equation}
\SigT =
\frac{1}{n_T-1}\sum_{i:T_i=1}(X_i-\muT)(X_i-\muT)^\top + \varepsilon I_p,
\qquad \varepsilon = 10^{-8}.
\label{eq:SigT}
\end{equation}
The control centroid is $\muC = n_C^{-1}\sum_{j:T_j=0} X_j$.

\section{Method}\label{sec:method}

\subsection{CRM Representation}\label{sec:repr}

CRM follows the design-based philosophy of causal inference
\citep{rubin2007,rosenbaum2010} by separating
the \emph{design stage} (representation construction, binning, and matching)
from the \emph{estimation stage} (outcome comparison within matched cells),
so that the matched sample is constructed without access to outcomes.
CRM characterizes each unit by two scalars derived from the geometry of
the treated population.

\paragraph{Radial component.}
The Mahalanobis distance from the treated centroid is
\begin{equation}
d(x) = \bigl[(x-\muT)^\top \SigT^{-1}(x-\muT)\bigr]^{1/2}.
\label{eq:d}
\end{equation}
Under $X\mid T=1 \sim \mathcal{N}(\mu_T,\Sigma_T)$, the squared distance
$d^2(X)$ follows a $\chi^2_p$ distribution (\Cref{prop:chi}), providing
a distributional reference for the binning design.

\paragraph{Directional component.}
Let $\hat{L}$ be the lower-triangular Cholesky factor of $\SigT$, so that
$\SigT = \hat{L}\hat{L}^\top$.
Define the whitened centroid shift
\begin{equation}
w = \hat{L}^{-1}(\muC-\muT) \in \R^p.
\label{eq:w}
\end{equation}
The Fisher direction in whitened space is defined as
\begin{equation}
v =
\begin{cases}
w / \|w\|_2, & \text{if } \|w\|_2 > 0, \\
0,           & \text{if } \|w\|_2 = 0,
\end{cases}
\label{eq:v}
\end{equation}
a unit vector pointing from the treated toward the control centroid in
whitened coordinates when the centroids differ, and zero otherwise.
The directional projection of unit $x$ is then
\begin{equation}
\phi(x) = v^\top \hat{L}^{-1}(x-\muT).
\label{eq:phi}
\end{equation}
Note that $\phi(x)$ measures how far unit $x$ lies along the treated--control
mean-shift direction in whitened space.
It is well-defined whenever $\muC \neq \muT$; when the centroids coincide,
$w = 0$ and the directional correction is unnecessary (\Cref{prop:indep}).

\paragraph{Why this representation.}
The two coordinates target complementary geometric discrepancies. The
Mahalanobis radius $d$ uses every covariate through the treated covariance
metric and places observations on covariance-aligned shells around the treated
centroid, as in classical Mahalanobis matching \citep{rubin1979}. The Fisher
coordinate $\phi$ then distinguishes locations along the mean-separating axis
that the radius alone cannot distinguish. Under the homoskedastic
Gaussian-class model, this axis is proportional to the population
propensity-score gradient (\Cref{eq:phips}). The Fisher-coordinate CRM variant
is therefore a deliberately simple default for settings dominated by a linear
mean shift, while retaining the computational benefit of fixed-dimensional
binning. This construction is not PCA and does not claim that $(d,\phi)$
preserves all information in $X$ or is a balancing score: many covariate
vectors share the same representation, which is why identification requires
Assumption~\ref{ass:reprignor}. Nonlinear assignment or treatment-dependent
covariance can require additional coordinates or a different reference
representation (\Cref{sec:discussion}).

\paragraph{CRM summary.}
The two-dimensional representation is $Z(x) = (d(x),\phi(x)) \in \R^2$.
We use $k$ to denote the number of directional coordinates included:
$k=0$ is the radial-only representation $Z(x)=d(x)$, and $k=1$ is the
default CRM representation using both $d$ and the Fisher coordinate $\phi$.
Both components are computed independently per unit in $O(p^2)$ after
the one-time $O(n_T p^2)$ estimation of $\SigT$ and $O(p^3)$ Cholesky
factorization.
Throughout the paper, $n=n_T+n_C$ denotes total sample size and $p$ denotes
the number of pretreatment covariates.
\Cref{fig:intuition} illustrates the representation geometrically.

\begin{figure}[htbp]
\centering
\includegraphics[width=\linewidth]{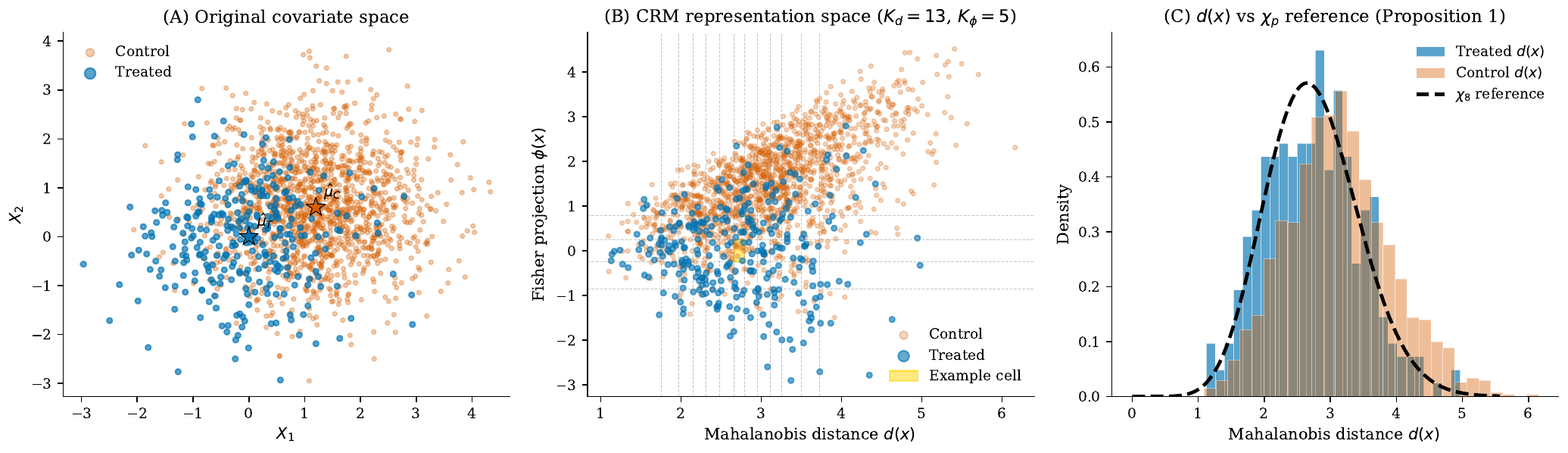}
\caption{CRM geometric intuition ($n_T=300$, $n_C=1{,}500$, $p=8$,
centroid shift $\approx1.3$).
\textbf{(A)}~Original covariate space: treated (blue) and control (orange)
clouds with sample centroids~$\hat\mu_T$ and~$\hat\mu_C$ marked.
\textbf{(B)}~CRM representation space $(d,\phi)$ with equal-frequency grid
overlaid; the gold cell illustrates one matched stratum.
\textbf{(C)}~Treated distance distribution versus the $\chi_p$ reference
(Proposition~\ref{prop:chi}), confirming the distributional foundation
for equal-frequency binning.}
\label{fig:intuition}
\end{figure}

\paragraph{Role of the Fisher coordinate.}
The radial-only variant ($k=0$, using $d$ alone without $\phi$) is
motivated by the special case in which no treated--control mean-shift
direction needs to be represented.
When $\mu_C \neq \mu_T$, radial binning cannot distinguish controls on
opposite sides of the same covariance-aligned shell, so systematic
directional imbalance can remain.
The default throughout this paper is therefore the Fisher-coordinate CRM
variant; \Cref{sec:sim-results} evaluates the difference empirically.

\subsection{Pre-Matching Support Diagnostic}\label{sec:diag}

Before any matching is performed, CRM constructs a grid on $(d,\phi)$
and counts control units in each cell.
Bin edges for the distance axis are set using the Freedman--Diaconis
rule \citep{freedman1981} applied to the treated distance distribution:
\begin{equation}
K_d = \min\!\left\{200,\,\max\!\left(10,
\left\lceil\frac{\max(d_T)-\min(d_T)}{2\,\mathrm{IQR}(d_T)\,n_T^{-1/3}}
\right\rceil\right)\right\},
\label{eq:Kd}
\end{equation}
where $d_T = \{d(X_i): T_i=1\}$.
The number of directional bins is $K_\phi = \max\{5, \lfloor K_d/4\rfloor\}$.
Bin edges for each axis are set at equal-frequency quantiles of the
\emph{marginal} treated distribution on that axis: $K_d$ quantiles of
$\{d(X_i): T_i=1\}$ and $K_\phi$ quantiles of
$\{\phi(X_i): T_i=1\}$.
This ensures each marginal bin contains approximately $n_T/K_d$ or
$n_T/K_\phi$ treated units respectively.
It does not equalize joint cell counts: $d$ and $\phi$ are generally
dependent, including under the Gaussian reference model. Joint cells can
therefore remain sparse, which is precisely what $\hat\pi$ is designed to
detect.

The \emph{shortage fraction} is
\begin{equation}
\hat\pi =
\frac{1}{n_T}\sum_{i:T_i=1}
\mathbf{1}\!\bigl(|C(k_d(X_i),k_\phi(X_i))|=0\bigr),
\label{eq:pihat}
\end{equation}
where $C(k_d,k_\phi)$ denotes the set of control units assigned to cell
$(k_d,k_\phi)$, and $k_d(X_i)$, $k_\phi(X_i)$ are the cell indices of
unit~$i$.
A cell is \emph{unsupported} if it contains at least one treated unit but
no controls; $\hat\pi$ is the fraction of treated units in unsupported cells.
This diagnostic is computed and reported before any units are discarded.

Calibration simulations ($n_T=1{,}000$, $p=10$, $n_C=10{,}000$) show that
$\hat\pi$ increases sigmoidally as a function of the centroid shift magnitude
$\|\mu_C-\mu_T\|$, crossing the $5\%$ threshold at shift $\approx 0.54$
and the $30\%$ threshold at shift $\approx 0.71$.
The Rayleigh statistic \citep{mardia2000} (\Cref{prop:indep}) remains flat
across all shift magnitudes, confirming it measures directional non-uniformity
within the treated group rather than inter-group separation; the two
diagnostics thus provide complementary information (\Cref{fig:shortage}).

\begin{figure}[htbp]
\centering
\includegraphics[width=\linewidth]{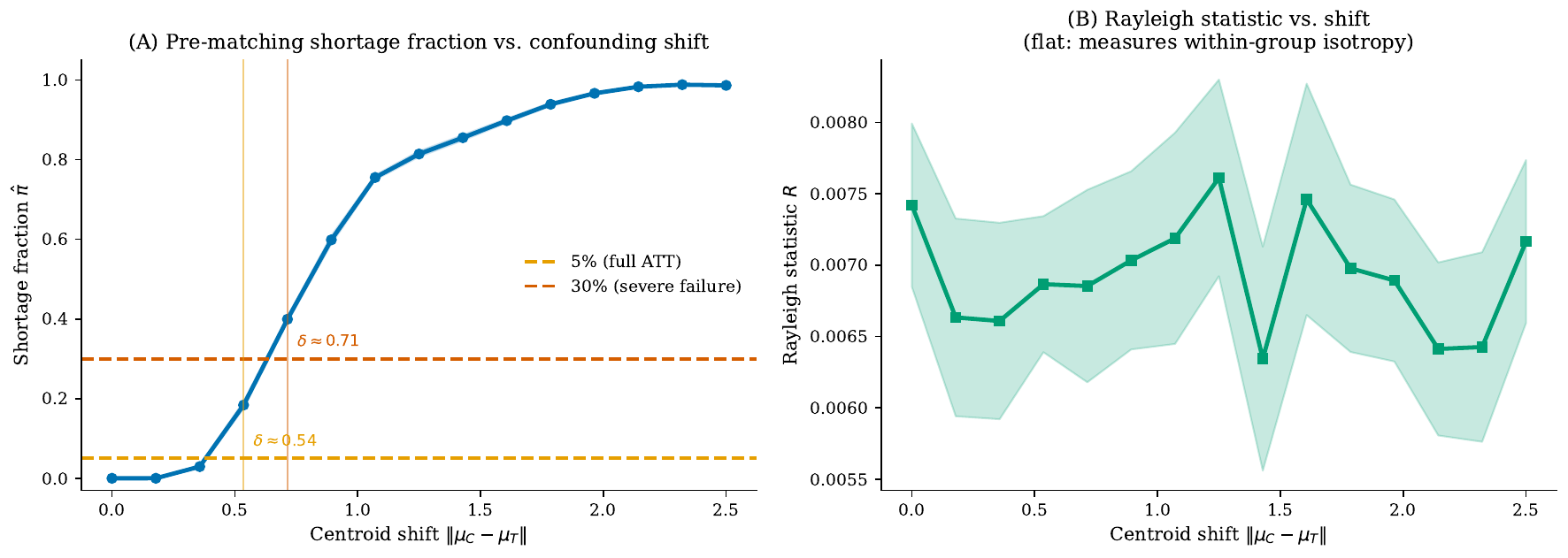}
\caption{Pre-matching shortage diagnostic ($n_T=1{,}000$, $p=10$,
$n_C=10{,}000$; shaded band = mean $\pm$ 95\% CI over replications).
\textbf{(A)}~Shortage fraction $\hat\pi$ increases sigmoidally with the
centroid shift magnitude, crossing 5\% at $\delta\approx0.54$ and 30\% at
$\delta\approx0.71$.
\textbf{(B)}~Rayleigh statistic $R$ remains flat, confirming it measures
within-group isotropy of the treated distribution rather than the
inter-group separation that drives $\hat\pi$; the two diagnostics are
complementary.}
\label{fig:shortage}
\end{figure}

\subsection{Stratified Matching Estimator}\label{sec:estimator}

For each cell $(k_d, k_\phi)$, let $T(k)$ and $C(k)$ denote the sets of
treated and control units assigned to that cell.
CRM samples $n_k = \min\{|T(k)|, |C(k)|\}$ controls without replacement
from $C(k)$.
Define the \emph{supported treated set}
\begin{equation}
S = \bigl\{i : T_i=1,\; |C(k_d(X_i),k_\phi(X_i))|>0\bigr\}.
\label{eq:S}
\end{equation}
For finite-sample matching without replacement, let $S_M\subseteq S$ denote
the treated units actually retained after cell-capacity constraints; when
every supported treated unit is retained, $S_M=S$.
The CRM estimator is
\begin{equation}
\taucrm =
\frac{1}{|S_M|}\sum_{i\in S_M}
\Biggl[Y_i -
\frac{1}{|C_i|}\sum_{j\in C_i} Y_j\Biggr],
\label{eq:taucrm}
\end{equation}
where $C_i$ denotes the set of matched controls in the same cell as treated
unit~$i$.
Each treated unit is compared to the cell-mean outcome of its matched
controls.
The shortage fraction $\hat\pi$ measures zero-control home cells before
matching; it does not include additional attrition when a supported cell has
fewer controls than treated units. We therefore report both $\hat\pi$ and
retention $|S_M|/n_T$.
Uncertainty is quantified by a paired bootstrap with $B=500$ replications
\citep{efron1993}, which we report as an \emph{approximate} measure of
variability.
A calibration study finds sub-nominal coverage for the paired bootstrap
across the tested sample sizes (\Cref{fig:coverage}); the conservative
cell-stratified and bias-correction diagnostics in \Cref{app:variance}
show that variance calibration and representation bias can both affect
coverage.

\begin{center}
\noindent\begin{minipage}{0.97\linewidth}
\small\setlength{\fboxsep}{8pt}
\fbox{\begin{minipage}{0.95\linewidth}
\textbf{Algorithm~1: CRM Matching ($k=1$)}\\[0.4em]
\textbf{Input:} Covariates $X_T, X_C$; outcomes $Y_T, Y_C$ withheld until Stage 3.\\
\textbf{Output:} Matched sample; $\hat\pi$; retention; $\taucrm$ with approximate bootstrap uncertainty summary (see \Cref{app:variance}).\\[0.5em]
\textbf{Stage 1: Representation and diagnostic} [$O(np^2+p^3+n\log n)$]\\
1.\ Compute $\muT, \SigT$ via \Cref{eq:SigT}; Cholesky-factor $\SigT = \hat{L}\hat{L}^\top$.\\
2.\ Compute $\muC$; set $w = \hat{L}^{-1}(\muC-\muT)$; if $\|w\|>0$ set $v = w/\|w\|_2$.\\
3.\ Compute $d(x_i)$ via \Cref{eq:d} and $\phi(x_i)$ via \Cref{eq:phi} for all $n$ units.\\
4.\ Set $K_d$ via \Cref{eq:Kd}; set $K_\phi = \max\{5, \lfloor K_d/4\rfloor\}$.\\
5.\ Assign units to cells; count controls per cell; report $\hat\pi$ via \Cref{eq:pihat}.\\[0.5em]
\textbf{Stage 2: Matching} [$O(n)$]\\
6.\ For each cell with $|T(k)|>0$: if $|C(k)|=0$, flag treated units as
    unsupported; otherwise form up to $n_k$ pairs without replacement
    (random or NN; see \Cref{sec:wcnn}) and record unmatched treated units.\\
7.\ Return matched indices; report $\hat\pi$, retention, and covariate profiles
    of unsupported and capacity-excluded treated units.\\[0.5em]
\textbf{Stage 3: Estimation} [$O(|S_M|)$]\\
8.\ Compute $\taucrm$ via \Cref{eq:taucrm}; construct a paired-bootstrap uncertainty summary ($B=500$; approximate; see \Cref{app:variance}).
\end{minipage}}
\end{minipage}
\end{center}

\subsection{Within-Cell Nearest-Neighbor Refinement (CRM-NN)}\label{sec:wcnn}

Standard CRM selects each matched control by uniform random sampling within
its cell (\textsc{CRM-random}).
Under the shrinking-cell conditions used for consistency, differences among
controls in the same cell vanish asymptotically. At small $n_T$, however,
cells can be wide and random sampling discards within-cell precision.

A refinement replaces uniform sampling with nearest-neighbor selection in
the whitened full-$p$-dimensional space, restricted to the cell's candidate
pool.
We call this \textsc{CRM-NN}.
Concretely, for treated unit $i$ in cell $k$, we select
\begin{equation}
j^*(i) = \argmin_{j \in C(k),\, j \notin \text{used}}
\bigl\|\hat{L}^{-1}(X_j - X_i)\bigr\|_2.
\label{eq:wcnn}
\end{equation}
The within-cell cost is $O(|C(k)| \cdot p)$ per cell; summing over all
cells gives $O(n_C p)$ additional work.
Thus the total implemented complexity of CRM-NN is
$O(np^2+p^3+n\log n+n_Cp)$; when $n_C/n_T$ is bounded and $n\geq p$, this has the same
leading linear-algebra order as base CRM, but with a larger constant.
CRM-NN overhead matters in practice when cells are large or the
control-to-treated ratio is very high.

A timing experiment across $n_T \in \{185, \ldots, 50{,}000\}$, summarized in
\Cref{fig:crm-nn}, shows that
CRM-NN yields $8$--$30\%$ MaxSMD improvement over CRM-random across all
tested sample sizes under a $10\!:\!1$ control ratio, with runtime overhead
small enough for interactive use when $n_T \leq 1{,}000$.
On the LaLonde CPS benchmark \citep{lalonde1986} ($n_T=185$, $n_C=15{,}992$,
true ATT $=\$1{,}794$ from the NSW randomized experiment
\citep{dehejia1999}), CRM-NN reduces ATT bias from
$-\$908$ to $+\$135$ while retaining $84\%$ of treated units, superior to
full Mahalanobis NN ($-\$199$ bias, $67\%$ retention) on both criteria
simultaneously (\Cref{fig:lalonde}).

\begin{figure}[htbp]
\centering
\includegraphics[width=\linewidth]{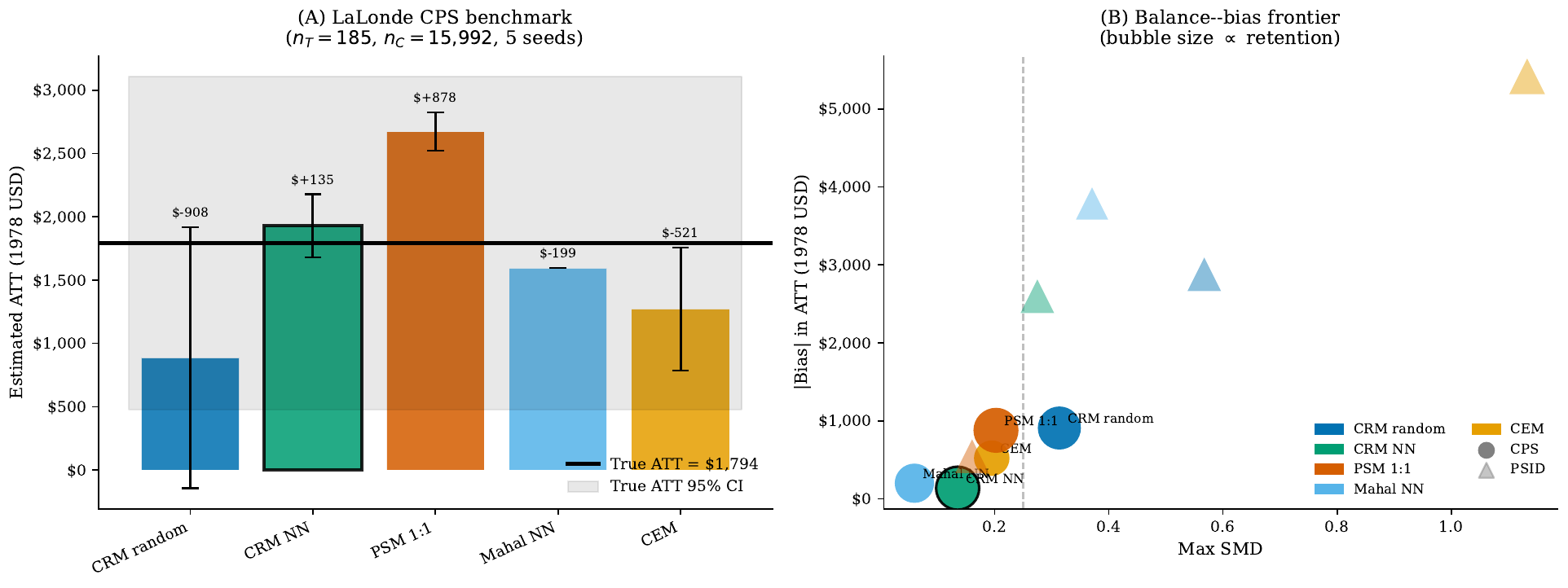}
\caption{LaLonde CPS benchmark \citep{lalonde1986} (true ATT $=\$1{,}794$
from the NSW randomized experiment; $n_T=185$, $n_C=15{,}992$; 5 seeds).
\textbf{(A)}~ATT point estimates with 95\% CI; bias (relative to true ATT)
annotated above each bar; black outline marks CRM-NN.
\textbf{(B)}~Balance--bias frontier; bubble size proportional to treated
retention.
CRM-NN occupies a lower-bias position than CRM-random while retaining
the same $84\%$ of treated units, and achieves lower bias with higher
retention than full Mahalanobis NN.}
\label{fig:lalonde}
\end{figure}

\begin{remark}
\textbf{Two-regime recommendation.}
Use CRM-NN when $n_T < 1{,}000$ (cells are large, local precision matters,
NN overhead $<100$~ms is negligible).
Use CRM-random when $n_T \geq 1{,}000$ (cells shrink, random $\approx$ NN
in quality, and NN overhead grows proportionally to $n_C/n_T$).
\end{remark}

\subsection{Connection to Propensity Score Methods}\label{sec:psconnect}

Under a linear treatment assignment model
$\Prob(T=1\mid X) = \sigma(\beta^\top X)$ with equal treated and control
covariance matrices, the LDA weight vector satisfies
$\beta \propto \Sigma^{-1}(\mu_C-\mu_T)$ \citep{fisher1936}.
Substituting into \Cref{eq:phi}:
\begin{align}
\phi(x)
&= \frac{(\muC-\muT)^\top \SigT^{-1}(x-\muT)}
        {\|\hat{L}^{-1}(\muC-\muT)\|_2}
\;\propto\; \beta^\top x + c
\label{eq:phips}
\end{align}
for some constant $c$ determined by the centroid shift.
CRM distributional matching on $(d,\phi)$ therefore simultaneously corrects
the propensity-score dimension (via $\phi$) and the Mahalanobis spread
(via $d$), without fitting a treatment model.
This is a geometric analogy, not a general identity: the proportionality
$\phi(x)\propto\beta^\top x + c$ holds under the homoskedastic Gaussian-class
model underlying linear discriminant analysis.
Outside that setting (for instance, under logistic treatment assignment with
non-Gaussian covariates), the logistic regression coefficient vector need not
equal the LDA direction.
$\phi(x)$ should therefore be interpreted as a directional score that
approximates the propensity-score dimension, not as a provably exact
substitute.
CRM uses $v$ regardless of the true treatment mechanism; its validity rests
on Assumption~\ref{ass:reprignor}, not on any treatment-model specification.

\subsection{CRM Variants}\label{sec:variants}

Three practical variants address specific data configurations; full
descriptions are given in \Cref{app:variants}.
\textbf{CRM-CAM} scales covariates to unit variance before computing
Mahalanobis distances, recommended for mixed continuous/binary data.
\textbf{CRM-Pool} uses pooled bin edges, reducing attrition under large
centroid shifts.
\textbf{CRM-Trim} explicitly discards treated units outside the control
distance support and reports them separately with an estimand caveat.

\section{Theoretical Properties}\label{sec:theory}

\subsection{Target Estimand and Supported Population}\label{sec:estimand}

CRM operates under potentially limited overlap by explicitly restricting
inference to the subset of treated units for which valid counterfactuals
exist in the representation space.
Recall from \Cref{eq:S} that the supported treated set is
\[
S = \bigl\{ i : T_i=1,\;\; |C(k_d(X_i), k_\phi(X_i))| > 0 \bigr\}.
\]
CRM therefore targets the \emph{conditional average treatment effect on the
supported treated} (CATT):
\begin{equation}
\tauS = \E[Y(1) - Y(0) \mid T = 1,\; i \in S],
\label{eq:tauS-def}
\end{equation}
which coincides with the full $\tauatt$ only when overlap is complete
($\hat\pi = 0$; \citealp{crump2009}).

\medskip\noindent
This formulation makes explicit a feature that is implicit in \emph{all}
matching methods: when overlap fails, every matching procedure restricts
the estimand to a subset of treated units.
CRM makes this restriction visible before any units are discarded, via the
pre-matching shortage fraction $\hat\pi$ (\Cref{eq:pihat}).

\subsection{Identification Under Representation}\label{sec:ident}

CRM replaces conditioning on the full covariate vector $X$ with conditioning
on the low-dimensional representation $Z(X) = (d(X), \phi(X))$.

Assumption~\ref{ass:ignor} (ignorability given $X$), together with
full-covariate overlap, justifies the causal problem at the full-covariate
level but is not sufficient for CRM's identification: a further
representation-level assumption is required. Assumption~\ref{ass:reprignor}
below strengthens this to ignorability given $Z(X)$; it holds exactly when
$(d,\phi)$ is sufficient for treatment assignment and the potential outcomes.
The Fisher direction targets the linear treated--control mean difference but
does not establish this sufficiency.

\begin{assumption}[Representation Sufficiency]\label{ass:reprignor}
$(Y(0), Y(1)) \perp T \mid Z(X)$, where $Z(X) = (d(X), \phi(X))$.
\end{assumption}

\begin{assumption}[Overlap in Representation Space]\label{ass:overlap}
$0 < \Prob(T=1 \mid Z(X)) < 1$ for all $Z(X)$ in the support
$\mathcal{Z}_S$ of the supported treated distribution.
\end{assumption}

\begin{assumption}[Smoothness]\label{ass:smooth}
There exists $L<\infty$ such that, for all $z,z'\in\mathcal{Z}_S$,
\[
\left|\E[Y(0)\mid Z=z]-\E[Y(0)\mid Z=z']\right|
\leq L\|z-z'\|_2 .
\]
\end{assumption}

\begin{proposition}[Identification]\label{prop:ident}
Under Assumptions~\ref{ass:sutva}, \ref{ass:reprignor},
and~\ref{ass:overlap},
\[
\tauS =
\E\!\Bigl[
\E[Y \mid T=1, Z] - \E[Y \mid T=0, Z]
\;\Big|\; T=1,\; Z \in \mathcal{Z}_S
\Bigr].
\]
\end{proposition}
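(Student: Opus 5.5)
The argument is the standard ignorability-plus-overlap identification, run with $Z=Z(X)$ in place of $X$ and restricted to the supported set. We condition on $T=1$ and $Z\in\mathcal{Z}_S$, interpreting the event $i\in S$ in \Cref{eq:tauS-def} as $Z(X_i)\in\mathcal{Z}_S$. This event is $Z$-measurable once the grid is regarded as fixed. The plan has three steps.

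\emph{Step 1: tower property.} Since $\{Z\in\mathcal{Z}_S\}$ is $Z$-measurable,
\[
\tauS=\E\bigl[\E[Y(1)\mid T=1,Z]-\E[Y(0)\mid T=1,Z]\,\big|\,T=1,Z\in\mathcal{Z}_S\bigr].
\]

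\emph{Step 2: treated arm.} Assumption~\ref{ass:sutva} gives $Y=Y(1)$ on $\{T=1\}$. Hence $\E[Y(1)\mid T=1,Z]=\E[Y\mid T=1,Z]$, and this step needs no ignorability.

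\emph{Step 3: counterfactual arm.} This is the only substantive step. Assumption~\ref{ass:reprignor} gives $\E[Y(0)\mid T=1,Z]=\E[Y(0)\mid Z]=\E[Y(0)\mid T=0,Z]$. Assumption~\ref{ass:sutva} then turns the last term into $\E[Y\mid T=0,Z]$. The conditional expectation given $T=0,Z=z$ must be well-defined for $P_{Z\mid T=1}$-almost every $z\in\mathcal{Z}_S$, and this is exactly what Assumption~\ref{ass:overlap} supplies: $\Prob(T=0\mid Z=z)>0$ on $\mathcal{Z}_S$. Substituting Steps 2 and 3 into Step 1 yields the displayed formula.

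The main obstacle is rigor rather than algebra. First, $Z$ is built from the sample quantities $\muT,\SigT,\muC$, and the cells are built from sample quantiles. To avoid dependence of $Z$ on the data, I would condition on the design stage, or treat $Z$ as the population-level map, and state that the identity holds for that fixed map. Second, the conditional expectations are only defined almost surely. I would handle this by writing them as regular conditional distributions and noting that overlap makes the control-arm version unique $P_{Z\mid T=1}$-a.e.\ on $\mathcal{Z}_S$. Smoothness (Assumption~\ref{ass:smooth}) is not needed for this identification result.
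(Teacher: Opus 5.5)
Your proof is correct and is essentially the paper's argument. Both use consistency (Assumption~\ref{ass:sutva}) and representation ignorability (Assumption~\ref{ass:reprignor}), then average over the supported treated distribution of $Z$. The differences are minor: you invoke ignorability only for the $Y(0)$ arm, whereas the paper applies it to both arms and then differences. You also make explicit three points the paper's sketch leaves implicit: that $\{Z\in\mathcal{Z}_S\}$ is measurable, that Assumption~\ref{ass:overlap} makes $\E[Y\mid T=0,Z]$ well-defined $P_{Z\mid T=1}$-a.e., and the fixed-design interpretation of $S$.
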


\begin{proof}[Sketch]
By consistency, $\E[Y\mid T=t,Z]=\E[Y(t)\mid T=t,Z]$.
Assumption~\ref{ass:reprignor} then gives
$\E[Y(t)\mid T=t,Z]=\E[Y(t)\mid Z]$ for $t\in\{0,1\}$.
Taking their difference and averaging over the supported treated
distribution identifies $\tau_S$. This proposition concerns the population
estimand; consistency of the cell estimator is addressed separately below.
\end{proof}

\begin{remark}
Identification shifts from high-dimensional covariate adjustment to
\emph{representation adequacy}: CRM is identified when $(d,\phi)$ captures
all confounding variation relevant for outcome differences.
The Fisher direction $v = w/\|w\|_2$ maximizes the separation between
treated and control group means in whitened covariate space, capturing
the dominant \emph{linear} component of treatment assignment heterogeneity.
Formally, $v$ solves
$\max_{\|u\|=1} (u^\top w)^2$, so it points exactly along the direction
of the centroid shift $\muC - \muT$ in the whitened space.
Thus $\phi$ captures the \emph{linear} confounding component associated
with the treatment-group mean difference; it does not, by itself,
guarantee that conditioning on $(d,\phi)$ removes all confounding, which is
the content of Assumption~\ref{ass:reprignor} and is not implied by
ignorability given $X$ (Assumption~\ref{ass:ignor}) alone.
In particular, $v$ may fail to represent nonlinear or higher-order
confounding that is orthogonal to the mean shift; such residual confounding
contributes to the representation bias in \Cref{thm:bias}.
\end{remark}

\begin{proposition}[Consistency]\label{prop:consistency}
Under Assumptions~\ref{ass:sutva}, \ref{ass:reprignor}, \ref{ass:overlap},
and~\ref{ass:smooth}, with $n_C/n_T \to r > 0$ and an uncapped sequence of
grids satisfying $K_dK_\phi\to\infty$, maximal cell diameter $h_n\to0$,
minimum supported-cell counts tending to infinity, and capacity-induced
attrition $|S\setminus S_M|/n_T\to0$,
\[
\taucrm \xrightarrow{p} \tauS \quad \text{as } n_T \to \infty.
\]
\end{proposition}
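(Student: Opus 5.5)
We decompose the estimation error into a capacity term, a stochastic term and a smoothing-bias term, and show each vanishes in probability. For $i\in S_M$ write $\mu_0(z)=\E[Y(0)\mid Z=z]$ and $\mu_1(z)=\E[Y(1)\mid Z=z]$. We first replace the average over $S_M$ by the average over $S$. Since every summand $Y_i-|C_i|^{-1}\sum_{j\in C_i}Y_j$ has bounded second moment, the fractional attrition $|S\setminus S_M|/n_T\to0$ makes the difference between the two averages $o_p(1)$. We then treat the estimated representation $Z$ (built from $\muT,\SigT,\hat L, v$) as fixed. The cleanest route is to condition on these design-stage quantities, which depend only on covariates; the randomness in $\hat L$ is handled by their consistency for population analogues, which we take as a regularity condition. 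With this in place we write
\[
\taucrm-\tauS = \underbrace{\frac{1}{|S|}\sum_{i\in S}\bigl[Y_i-\mu_1(Z_i)\bigr] - \frac{1}{|S|}\sum_{i\in S}\frac{1}{|C_i|}\sum_{j\in C_i}\bigl[Y_j-\mu_0(Z_j)\bigr]}_{\text{stochastic}}
+ \underbrace{\frac{1}{|S|}\sum_{i\in S}\frac{1}{|C_i|}\sum_{j\in C_i}\bigl[\mu_0(Z_i)-\mu_0(Z_j)\bigr]}_{\text{discretization}}
+ \Bigl(\frac{1}{|S|}\sum_{i\in S}\bigl[\mu_1(Z_i)-\mu_0(Z_i)\bigr]-\tauS\Bigr) + o_p(1).
\]
Assumption~\ref{ass:reprignor} together with SUTVA (Assumption~\ref{ass:sutva}) gives $\E[Y_i\mid T_i=1,Z_i]=\mu_1(Z_i)$ and $\E[Y_j\mid T_j=0,Z_j]=\mu_0(Z_j)$. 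This is exactly the identity in Proposition~\ref{prop:ident}.

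\textbf{Discretization term.} Matched $i$ and $j$ lie in the same cell, so $\|Z_i-Z_j\|\le h_n$. By Assumption~\ref{ass:smooth} each summand is bounded by $Lh_n$, hence the term is deterministically at most $Lh_n\to0$.

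\textbf{Stochastic term.} Conditional on the covariates and cell assignments, the summands are independent and mean zero. The treated part is a sample mean with variance $O(1/|S|)$, and $|S|\asymp n_T$ because the supported treated fraction is bounded away from zero on $\mathcal Z_S$. The control part is an average of cell means, each based on $|C_i|$ controls. Its conditional variance is at most $\sigma^2/\min_k|C(k)|$, plus the effect of reusing a cell mean across treated units in the same cell, which gives at most the same order. Because the minimum supported-cell count tends to infinity, Chebyshev's inequality sends this term to zero in probability. Sampling without replacement within a cell does not change the argument, since it only lowers the variance.

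\textbf{Main obstacle.} The hardest step is the last bracket, showing that the empirical average over $S$ converges to $\tauS$. The set $S$ is data-dependent: its grid edges are treated quantiles and the cells shrink with $n$. We would handle this as follows.
\begin{itemize}
\item Define $\tauS$ as the limit functional over $\mathcal Z_S$, using Assumption~\ref{ass:overlap} and $n_C/n_T\to r>0$. Together these ensure that every cell of positive treated mass eventually contains controls, with probability tending to one.
\item Show that the indicator $\mathbf 1(i\in S)$ converges to $\mathbf 1(Z_i\in\mathcal Z_S)$ in $L^1$. The set where they disagree lies within distance $h_n$ of the boundary of $\mathcal Z_S$, so its probability tends to zero provided the boundary has measure zero.
\item Apply a weak law of large numbers to $\mu_1-\mu_0$, which is integrable, over this asymptotically fixed set.
\end{itemize}
The boundary condition in the second item is an implicit regularity condition, and we would state it explicitly.
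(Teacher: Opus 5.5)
Your skeleton matches the paper's sketch. That sketch rests on exactly your discretization term (matched units share a cell of diameter $h_n\to0$, so Assumption~\ref{ass:smooth} gives an $O(h_n)$ bias) and your control-side stochastic term (cell control means converge by a law of large numbers as supported-cell counts grow). Your capacity step, the treated-side average, and the passage from the data-dependent set $S$ to a fixed limiting $\mathcal Z_S$ under a null-boundary condition are steps the paper leaves implicit. The regularity conditions you make explicit (bounded conditional variance, null boundary, control of the estimated representation) are ones the paper's appeal to the law of large numbers tacitly needs too.

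The step that fails as written is the stochastic term. To freeze $\hat L$, $v$ and the cells, you condition on the covariates and then assert that $Y_i-\mu_1(Z_i)$ and $Y_j-\mu_0(Z_j)$ are conditionally mean zero. Given $X_i$, however, $\E[Y_i\mid X_i,T_i=1]=\E[Y(1)\mid X_i,T_i=1]$. This need not equal $\mu_1(Z_i)$, even under ignorability given $X$ (which is not among the hypotheses here). Assumption~\ref{ass:reprignor} concerns conditioning on $Z$, and the identity you borrow from Proposition~\ref{prop:ident} holds given $Z_i$, not given $X_i$.

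For a fixed representation, condition instead on the $\sigma$-field $\mathcal G$ generated by $\{(Z_k,T_k)\}_{k\le n}$ and the within-cell sampling. The cell edges, $S$, $S_M$ and the $C_i$ are then $\mathcal G$-measurable. I.i.d.\ sampling with SUTVA and Assumption~\ref{ass:reprignor} makes the outcomes conditionally independent with $\E[Y_i\mid\mathcal G]=\mu_{T_i}(Z_i)$, after which your Chebyshev bound goes through. This is the conditioning implicit in the paper's statement that cell control means converge to $\E[Y(0)\mid Z=z_k]$.

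With the estimated $\hat Z$, you cannot both freeze the cells and keep mean zero. One option is to split $Y_i-\mu_{T_i}(Z_i)$ into two parts: $Y_i-\E[Y_i\mid X_i,T_i]$, which is mean zero given the covariates, and $\E[Y_i\mid X_i,T_i]-\mu_{T_i}(Z_i)$, which is mean zero only given $Z_i$ and needs its own law of large numbers over the data-dependent $S$ and $C_i$. The other option is to impose Assumption~\ref{ass:reprignor} on the limiting representation and bound how $\hat Z-Z$ alters cell membership and the discretization term. A bare consistency assumption on $\hat L$ does not by itself complete either route.

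Two smaller slips. First, the control-side variance is governed by the matched count $n_k=\min\{|T(k)|,|C(k)|\}$ rather than $|C(k)|$, so your bound needs the cell-count hypothesis to cover treated as well as control counts. Second, Assumption~\ref{ass:smooth} is stated only on $\mathcal Z_S$, so matched controls in cells straddling its boundary need the Lipschitz bound on a neighborhood.
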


\begin{proof}[Sketch]
These conditions make the grid finer while retaining enough treated and
control observations in each supported cell. The uncapped
Freedman--Diaconis scaling has $K_d\propto n_T^{1/3}$; with
$K_\phi\propto K_d$, the number of joint cells is $O(n_T^{2/3})$ and
average cell counts are $O(n_T^{1/3})$.
The implementation caps $K_d$ at 200 as a finite-sample safeguard. That cap
must increase with sample size for this asymptotic statement; it is not part
of the grid sequence assumed by the proposition.
These are the standard ingredients used to establish consistency for
histogram, CEM \citep{iacus2012}, and kernel matching estimators
\citep{heckman1997}.
As $n_T \to\infty$, the grid diameter $h_n \to 0$ and each cell converges
to a point mass.
The cell control means converge to $\E[Y(0)\mid Z=z_k]$ by the law of
large numbers, and the bias $O(h_n) \to 0$ by Assumption~\ref{ass:smooth}.
\end{proof}

\subsection{Estimand Gap Under Limited Overlap}\label{sec:gap}

\begin{theorem}[Conditional ATT decomposition]\label{thm:estimand-gap}
Let $\pi = \Prob(i \notin S\mid T_i=1)$ denote the population shortage
fraction, $\tauS = \E[Y(1)-Y(0)\mid T=1,\; i\in S]$, and
$\tau_{S^c} = \E[Y(1)-Y(0)\mid T=1,\; i\notin S]$.
Then
\begin{equation}
\tauatt = (1-\pi)\tauS + \pi\,\tau_{S^c}.
\label{eq:att-decomp}
\end{equation}
Consequently,
\begin{equation}
\tauatt - \tauS = \pi\bigl(\tau_{S^c} - \tauS\bigr),
\label{eq:gap}
\end{equation}
which equals zero if and only if $\pi=0$ or $\tau_{S^c} = \tauS$.
\end{theorem}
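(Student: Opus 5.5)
The decomposition is the law of total expectation applied to the treatment-effect random variable $Y(1)-Y(0)$, conditioning on $T=1$ and partitioning on the event $\{i\in S\}$.

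\textbf{Step 1: the partition.} Write $\Delta = Y(1)-Y(0)$ and let $A=\{i\in S\}$, so that $\Prob(A\mid T=1)=1-\pi$ and $\Prob(A^c\mid T=1)=\pi$. Integrability of $\Delta$ given $T=1$ is implicit in $\tauatt$ being defined. The event $A$ is determined by a unit's cell membership together with the realized cell counts; the population version is $\{Z(X)\in\mathcal{Z}_S\}$. Either way it is measurable, so the tower property applies:
\[
\E[\Delta\mid T=1]=\E[\Delta\mid T=1,A]\,\Prob(A\mid T=1)+\E[\Delta\mid T=1,A^c]\,\Prob(A^c\mid T=1).
\]
Substituting the definitions gives \Cref{eq:att-decomp}.

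\textbf{Step 2: the gap.} Subtract $\tauS=(1-\pi)\tauS+\pi\tauS$ from both sides of \Cref{eq:att-decomp}. This yields \Cref{eq:gap} directly.

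\textbf{Step 3: the characterization of zero.} A product of two real numbers vanishes if and only if one factor vanishes, so the gap is zero exactly when $\pi=0$ or $\tau_{S^c}=\tauS$.

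\textbf{Edge cases.} When $\pi=0$, $\tau_{S^c}$ conditions on a null event and is undefined. I will adopt the convention $\pi\,\tau_{S^c}=0$, so \Cref{eq:att-decomp} reduces to $\tauatt=\tauS$. The symmetric case $\pi=1$ is handled the same way with $\tauS$. I will state this convention explicitly, because the ``if and only if'' clause should be read with it in force.

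\textbf{Main obstacle.} The only subtle point is what $\pi$ and $S$ mean at the population level, since $S$ as defined in \Cref{eq:S} depends on the sample. I would handle this by conditioning on the grid, and on the design more generally. Given the grid, membership in $S$ is a fixed function of $X_i$, and the identity is the exact law of total probability. None of ignorability, overlap or smoothness is used, so the decomposition is purely an accounting identity.
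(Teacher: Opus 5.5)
Your proof is correct and follows the paper's argument: partition the treated population on $\{i\in S\}$, apply the law of total expectation to obtain \Cref{eq:att-decomp}, and rearrange to get \Cref{eq:gap}. Your explicit convention for the degenerate cases $\pi\in\{0,1\}$ and your measurability remark are refinements the paper omits. Note, though, that the unconditional tower-property step already suffices, so conditioning on the grid is optional; when the grid is estimated from the same sample, that conditioning can turn the left-hand side into a design-conditional ATT rather than $\tauatt$ itself.
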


\begin{proof}
Partition the treated population on $\{i\in S\}$ and apply the law of
total expectation:
\begin{align*}
\tauatt
&= \E[Y(1)-Y(0)\mid T=1] \\
&= \E[Y(1)-Y(0)\mid T=1,\; i\in S]\,(1-\pi)
 + \E[Y(1)-Y(0)\mid T=1,\; i\notin S]\,\pi \\
&= (1-\pi)\tauS + \pi\,\tau_{S^c}.
\end{align*}
Rearranging gives \Cref{eq:gap}. \qed
\end{proof}

\begin{corollary}[Gap bound]\label{cor:gap-bound}
Let $M_\tau=\sup_i\{Y_i(1)-Y_i(0)\}-
\inf_i\{Y_i(1)-Y_i(0)\}$ be the range of individual treatment effects.
Assume $M_\tau<\infty$.
Then
\begin{equation}
|\tauatt - \tauS| \leq M_\tau\pi.
\label{eq:gap-bound}
\end{equation}
\end{corollary}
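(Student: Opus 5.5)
The plan is to start from the exact identity in \Cref{thm:estimand-gap}, namely $\tauatt-\tauS=\pi(\tau_{S^c}-\tauS)$, and bound the factor $|\tau_{S^c}-\tauS|$ by the range $M_\tau$. Once that is in place, the corollary follows immediately: taking absolute values gives $|\tauatt-\tauS|=\pi\,|\tau_{S^c}-\tauS|\le M_\tau\pi$, using $\pi\ge 0$.

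Write $a=\inf_i\{Y_i(1)-Y_i(0)\}$ and $b=\sup_i\{Y_i(1)-Y_i(0)\}$, so that $M_\tau=b-a<\infty$. Both $a$ and $b$ are finite, since otherwise the range would be infinite. Every individual effect then lies in $[a,b]$ almost surely. Both $\tauS$ and $\tau_{S^c}$ are conditional expectations of $Y(1)-Y(0)$ given the events $\{T=1,i\in S\}$ and $\{T=1,i\notin S\}$. By monotonicity of conditional expectation, each lies in $[a,b]$. Two numbers in the same interval of length $M_\tau$ differ by at most $M_\tau$, so $|\tau_{S^c}-\tauS|\le M_\tau$.

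The only delicate point is a degenerate case and a mild measurability convention. If $\pi=0$, then $\tau_{S^c}$ is conditioning on a null event and may be undefined. In that case \Cref{eq:att-decomp} already reads $\tauatt=\tauS$ (with the convention $0\cdot\tau_{S^c}=0$), and the bound holds trivially. Symmetrically, if $\pi=1$, then $\tauS$ is undefined, and the statement should be read with the convention implicit in \Cref{thm:estimand-gap}. I would handle this by a short case split on $\pi\in\{0\}$, $\pi\in(0,1)$, and $\pi=1$, restricting the main argument to $0<\pi<1$, where both conditional means are well defined and bounded as above.

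I would also remark that $\sup_i$ and $\inf_i$ should be read as essential bounds on the population distribution of individual effects. This ensures the conditional expectations inherit the bounds; that reading is the main thing to state carefully, since the rest is immediate.
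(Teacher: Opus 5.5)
Your proposal is correct and follows the paper's proof: both conditional means $\tauS$ and $\tau_{S^c}$ lie in the interval spanned by the individual effects, so $|\tau_{S^c}-\tauS|\le M_\tau$, and the identity $\tauatt-\tauS=\pi(\tau_{S^c}-\tauS)$ from \Cref{thm:estimand-gap} then gives the bound. Your extra care about the degenerate cases $\pi\in\{0,1\}$, and about reading $\sup_i$ and $\inf_i$ as essential bounds on the population distribution, is a sensible tightening that the paper leaves implicit.
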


\begin{proof}
Both conditional average treatment effects lie between the minimum and
maximum individual treatment effects, so
$|\tau_{S^c}-\tau_S|\leq M_\tau$. Apply \Cref{eq:gap}.
\end{proof}

\begin{remark}
\Cref{thm:estimand-gap} applies to \emph{any} matching estimator.
The value of CRM's pre-matching diagnostic is that $\hat\pi$ is reported
before any unit is discarded. This makes the support-restriction weight
observable; the gap itself still depends on treatment-effect heterogeneity.
When $M_\tau$ can be bounded from prior knowledge or sensitivity analysis,
\Cref{eq:gap-bound} gives an explicit quantitative bound on estimand drift.
\end{remark}

\subsection{Bias Decomposition}\label{sec:bias}

The main theoretical contribution of this section is a four-part
decomposition of the total estimation error of $\taucrm$ relative to
$\tauatt$.
The decomposition clarifies why low marginal imbalance need not imply low
estimation error.

\begin{theorem}[Estimation error decomposition]\label{thm:bias}
Let $\Delta(z)$ denote residual within-representation confounding:
\[
\Delta(z)
= \E\bigl[Y(0) \mid Z(X)=z,\; T=1\bigr]
- \E\bigl[Y(0) \mid Z(X)=z,\; T=0\bigr].
\]
Let $h_n$ be the maximal cell diameter.
Under Assumptions~\ref{ass:sutva}, \ref{ass:ignor}, and
\ref{ass:smooth}, and with $S_M=S$, the total estimation error of the CRM
estimator admits
the decomposition
\begin{align}
\taucrm - \tauatt
&=
\underbrace{\E\bigl[\Delta(Z(X)) \mid T=1,\; i\in S\bigr]}_{\text{representation bias (residual confounding within $Z$)}}
\label{eq:full-decomp}\\
&\quad+
\underbrace{\tauS-\tauatt}_{\text{support restriction bias } = -\pi(\tau_{S^c}-\tauS)}
\notag\\
&\quad+ \underbrace{O_p(h_n)}_{\text{cell approximation}}
+ \underbrace{O_p\!\left((n_T h_n^2)^{-1/2}\right)}_{\text{conservative stochastic bound}}.
\notag
\end{align}
The representation-bias term is zero under
Assumption~\ref{ass:reprignor}.
If $S_M\neq S$, an additional finite-sample subset-selection term
$\tau_{S_M}-\tau_S$ enters the decomposition, where
$\tau_{S_M}=\E[Y(1)-Y(0)\mid T=1,i\in S_M]$. This term must be assessed
through retention and the characteristics of excluded treated units.
\end{theorem}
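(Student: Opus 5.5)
The plan is to write $\taucrm-\tauatt$ as a telescoping sum of four differences and bound each one separately. First I center the estimator at the oracle cell-level contrast. For each supported cell $k$, let $m_1(k)$ and $m_0(k)$ be the cell averages of $\E[Y\mid T=1,Z]$ and $\E[Y\mid T=0,Z]$ over the treated units in the cell. Since $S_M=S$, \Cref{eq:taucrm} is a weighted average over cells, with weight $|T(k)|/|S|$, of the treated sample mean minus the matched-control sample mean. I then write
\[
\taucrm-\tauatt=(\taucrm-A_n)+(A_n-B_n)+(B_n-\tauS)+(\tauS-\tauatt),
\]
where $A_n$ is the cell-weighted average of $m_1(k)-\E[Y\mid T=0,Z\in\text{cell }k]$ and $B_n$ is its pointwise analogue, $\E[\E[Y\mid T=1,Z]-\E[Y\mid T=0,Z]\mid T=1,Z\in\mathcal{Z}_S]$.

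The last difference, $\tauS-\tauatt$, is handled entirely by \Cref{thm:estimand-gap}, which gives exactly $-\pi(\tau_{S^c}-\tauS)$.

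Next I identify the representation bias. By SUTVA, $\E[Y\mid T=1,Z]=\E[Y(1)\mid T=1,Z]$ and $\E[Y\mid T=0,Z]=\E[Y(0)\mid T=0,Z]$. Adding and subtracting $\E[Y(0)\mid T=1,Z]$ splits $B_n$ into two parts:
\[
B_n=\E[Y(1)-Y(0)\mid T=1,i\in S]+\E[\Delta(Z)\mid T=1,i\in S]=\tauS+\E[\Delta(Z)\mid T=1,i\in S].
\]
This yields the first term of \Cref{eq:full-decomp}. Under Assumption~\ref{ass:reprignor}, $\E[Y(0)\mid T=1,Z]=\E[Y(0)\mid T=0,Z]$, so $\Delta\equiv0$, which is the stated special case. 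Assumption~\ref{ass:ignor} given $X$ is used only to keep the potential-outcome regressions well defined; it is not used to kill $\Delta$.

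The cell approximation term is $A_n-B_n$. Within a cell, replacing $\E[Y(0)\mid T=0,Z=z]$ by its average over the cell changes it by at most $L h_n$, by Assumption~\ref{ass:smooth}. Strictly, that assumption bounds $\E[Y(0)\mid Z]$, so I will apply it to the control-arm regression, read as a Lipschitz condition on the relevant conditional means. Averaging the cellwise bound gives $O_p(h_n)$. Replacing the sample cell weights by population weights contributes only lower-order terms.

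The main obstacle is the stochastic term $\taucrm-A_n$. Conditional on the cell assignments, it is a weighted sum of within-cell centered sample means. For each cell I will bound the conditional variance by $\sigma^2(1/|T(k)|+1/n_k)$, assuming bounded conditional outcome variance. I then sum with weights $(|T(k)|/|S|)^2$ and apply Chebyshev. The number of cells in a two-dimensional grid of diameter $h_n$ over a bounded region of $\mathcal{Z}_S$ is $O(h_n^{-2})$. In the worst case each cell therefore carries about $n_Th_n^2$ treated units, and controls with $n_k\asymp n_Th_n^2$. This gives variance at most $O(1/(n_Th_n^2))$, hence the conservative rate $O_p((n_Th_n^2)^{-1/2})$. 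The term is conservative because it uses the minimum per-cell count rather than the aggregate $1/|S|$. The delicate points are the randomness of the cell boundaries, since the quantile edges and $\hat L$, $v$ are estimated, and the requirement that the minimum cell count scale like $n_Th_n^2$. I would handle both by conditioning on the design-stage quantities, which are outcome-free, and treating the edges as fixed.

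Finally, if $S_M\neq S$, the same argument applied to $S_M$ replaces $\tauS$ with $\tau_{S_M}$. Adding and subtracting $\tauS$ then produces the extra term $\tau_{S_M}-\tauS$.
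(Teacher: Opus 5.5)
Your proposal is correct and follows the paper's own argument. The paper adds and subtracts $\tauS$, uses \Cref{thm:estimand-gap} for the support term, and splits the remaining within-cell contrast into $\Delta$, an $O_p(h_n)$ Lipschitz term, and a sampling term; your version does the same with $A_n$ and $B_n$ made explicit. Two of your refinements hold up: Assumption~\ref{ass:smooth} must indeed be read as Lipschitz continuity of $\E[Y(0)\mid Z, T=0]$, since this equals $\E[Y(0)\mid Z]$ only under Assumption~\ref{ass:reprignor}; and because $S_M=S$ forces $n_k=|T(k)|$, your weighted variance sum is at most $2\sigma^2/|S|$, so the minimum-cell-count condition you impose is unnecessary for the conservative stochastic term.
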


\begin{proof}[Sketch]
Add and subtract $\tauS$:
\begin{align*}
\taucrm - \tauatt
&= (\taucrm - \tauS) + (\tauS - \tauatt).
\end{align*}
The second term equals $-\pi(\tau_{S^c}-\tauS)$ by
\Cref{thm:estimand-gap}; this is the support restriction bias.
For the first term, within each cell $k$ the control mean estimates
$\E[Y(0)\mid Z=z_k, T=0]$.
The corresponding treated counterfactual mean is
$\E[Y(0)\mid Z=z_k, T=1]$; their difference is $\Delta(z_k)$.
Finite cell diameter contributes $O_p(h_n)$ under
Assumption~\ref{ass:smooth}, and the $O_p(n_T^{-1/2})$ term collects
within-cell sampling variability.
\end{proof}

\begin{remark}
\Cref{thm:bias} separates four sources of error distinctly:
\emph{representation bias}, \emph{support restriction bias},
\emph{cell-approximation error} (the deterministic $O(h_n)$ discretization
term from finite cell diameter), and \emph{stochastic error} (the
$O_p((n_T h_n^2)^{-1/2})$ conservative within-cell sampling bound).
The \emph{representation bias} $\E[\Delta(Z(X))\mid T=1,i\in S]$
is zero under Assumption~\ref{ass:reprignor}; the Fisher direction $v$ is
designed to capture the treated--control mean-shift component, but it need
not remove nonlinear or higher-order confounding.
The \emph{support restriction bias} $\pi(\tau_{S^c}-\tauS)$ is zero
if and only if $\pi=0$ or $\tau_{S^c}=\tau_S$;
it is bounded by $M_\tau\pi$ (\Cref{cor:gap-bound}) and diagnosed by
the pre-matching diagnostic.
The \emph{cell-approximation error} vanishes as the grid refines, and the
conservative stochastic term vanishes when $n_T h_n^2\to\infty$.
The decomposition is specific to CRM, but its broader lesson is that low
marginal MaxSMD alone does not control representation or support-restriction
bias. The CEM result in \Cref{sec:criteo} is an empirical illustration of
that distinction, not a consequence of this theorem for CEM.
\end{remark}

\begin{proposition}[Representation error]\label{prop:bias}
Under Assumptions~\ref{ass:sutva} and~\ref{ass:ignor}, the quantity
$\Delta(z)$ in \Cref{thm:bias} is the bias from replacing full-covariate
adjustment by adjustment on $Z(X)$.
Under Assumption~\ref{ass:reprignor}, $\Delta(z)=0$ for all supported
$z$; in general, $\Delta(z)$ measures the residual confounding not captured
by the CRM representation.
\end{proposition}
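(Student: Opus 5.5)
The plan is to prove Proposition~\ref{prop:bias} in two parts: first an exact algebraic identity showing that $\Delta$ is the gap between the $Z$-level and $X$-level adjustments, then the vanishing of $\Delta$ under Assumption~\ref{ass:reprignor}.

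\emph{Step 1: the target contrast.} Fix a supported $z$. The quantity CRM would need, for the treated units in a cell at $z$, is the counterfactual mean $\E[Y(0)\mid Z=z,T=1]$. What the cell estimator actually delivers (Proposition~\ref{prop:ident} and the proof of \Cref{thm:bias}) is $\E[Y\mid Z=z,T=0]$. By Assumption~\ref{ass:sutva}, the latter equals $\E[Y(0)\mid Z=z,T=0]$. Hence
\[
\E[Y(0)\mid Z=z,T=1]-\E[Y\mid Z=z,T=0]=\Delta(z)
\]
holds identically, with no further assumption.

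\emph{Step 2: compare with full-covariate adjustment.} Write $m_0(x)=\E[Y(0)\mid X=x]$. By Assumption~\ref{ass:ignor} and SUTVA, $m_0(x)=\E[Y\mid X=x,T=0]$. Since $Z$ is a measurable function of $X$, the tower property gives
\[
\E[Y(0)\mid Z=z,T=t]=\E\bigl[m_0(X)\mid Z=z,T=t\bigr],\qquad t\in\{0,1\}.
\]
This conditioning step, where ignorability given $X$ passes through the coarsening to $Z$, is the only delicate point.

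From this identity, full-covariate adjustment averages $m_0$ over the treated law of $X$ within the $Z$-fiber and so recovers the correct counterfactual. Adjustment on $Z$ alone averages $m_0$ over the control law within the same fiber. Therefore
\[
\Delta(z)=\int m_0(x)\,\bigl[dP_{X\mid Z=z,T=1}-dP_{X\mid Z=z,T=0}\bigr](x).
\]
This integral is exactly the bias from replacing $X$-adjustment by $Z$-adjustment. It vanishes whenever $m_0$ is constant on fibers or the two fiber distributions coincide, which gives the "residual confounding" interpretation.

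\emph{Step 3: vanishing under representation sufficiency.} Under Assumption~\ref{ass:reprignor}, $Y(0)\perp T\mid Z$, so the two conditional means in the definition of $\Delta$ agree at every $z$. Assumption~\ref{ass:overlap} ensures both conditionings are well defined on $\mathcal{Z}_S$. Hence $\Delta(z)=0$ there, and the representation-bias term of \Cref{thm:bias} vanishes after averaging over the supported treated distribution.
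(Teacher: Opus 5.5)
Your argument is correct. The paper gives no separate proof of this proposition. It treats the vanishing claim as immediate from the definition of $\Delta$ together with Assumption~\ref{ass:reprignor}: $Y(0)\perp T\mid Z$ forces $\E[Y(0)\mid Z=z,T=1]=\E[Y(0)\mid Z=z,T=0]$, which is the same step used in the sketch of Proposition~\ref{prop:ident}. The first sentence (``$\Delta$ is the bias from replacing full-covariate adjustment by adjustment on $Z$'') is left as an interpretation.

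Your Step~3 is exactly that argument. Your Step~2 is what goes beyond the paper. Using ignorability given $X$ and the tower property, you obtain the identity $\Delta(z)=\E[m_0(X)\mid Z=z,T=1]-\E[m_0(X)\mid Z=z,T=0]$. This turns the paper's interpretive claim into a precise statement. It also gives sufficient conditions for $\Delta=0$ that do not need Assumption~\ref{ass:reprignor}: $m_0$ constant on the fibers of $Z$, or equal treated and control laws of $X$ within each fiber.

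Two small bookkeeping points:
\begin{itemize}
\item The identity itself needs only ignorability. Writing $m_0(x)=\E[Y\mid X=x,T=0]$ on the treated support, so that ``full-covariate adjustment'' is an observable-data procedure, also uses full-covariate overlap (Assumption~\ref{ass:xoverlap}). That assumption is not among the proposition's stated hypotheses. Likewise, your appeal to Assumption~\ref{ass:overlap} for well-definedness is an extra hypothesis, though a harmless one, since ``supported $z$'' implicitly requires it.
\item The step $\sigma(Z)\subseteq\sigma(X)$ treats $Z$ as a fixed function of $X$. The implemented $d$ and $\phi$ use $\muT$, $\SigT$, and $\muC$ estimated from the whole sample, including the treatment labels. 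Your argument, like the paper's assumptions, is therefore about the population version of the representation.
\end{itemize}
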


\subsection{Supporting Geometric Properties}\label{sec:chi}

Two additional results support the design choices in Algorithm~1; proofs
are in \Cref{app:geometry}.

Under a Gaussian treated distribution, $d^2(X)\mid T=1 \sim \chi^2_p$
(\Cref{prop:chi}), which supplies a radial reference distribution.
The implementation uses empirical treated-sample quantiles, rather than
population $\chi^2_p$ quantiles, to obtain approximately equal marginal
treated counts per distance bin.
Separately, $d(X)$ and the unit direction $u(X) = \hat{L}^{-1}(X-\mu_T)/d(X)$
are independent under the Gaussian model (\Cref{prop:indep}). This polar
decomposition separates radial magnitude from unit direction; the actual
coordinate $\phi=d\,v^\top u$ combines both through a signed projection and
is not independent of $d$.
This within-treated independence is unaffected by $\mu_C$. The role of
$\phi$ is instead to distinguish the angular direction of the
treated--control centroid shift, information that $d$ alone discards.
We stress that these two properties are \emph{motivational}: they
characterize the reference geometry of the treated distribution around its
own centroid and justify the binning and the inclusion of $\phi$, but they
are not identification guarantees and do not imply that $(d,\phi)$ are
independent (or that confounding is removed) in the matched
treated-versus-control comparison, which is governed instead by
Assumption~\ref{ass:reprignor} and the bias decomposition of
\Cref{thm:bias}.

Algorithm~1 has implemented time complexity $O(np^2+p^3+n\log n)$ and space
complexity $O(p^2+n)$, both independent of pairwise treated--control search
(\Cref{prop:complexity}).
In the paper's $n\gg p$ regime, $O(np^2)$ dominates the one-time $O(p^3)$
factorization; $O(n\log n)$ comes from quantile binning.
Brute-force nearest-neighbor matching adds a pairwise-search term, whereas
one-dimensional propensity-score matching can also exploit sorting. The
runtime comparison below is therefore empirical and implementation-specific.
Measured $1.5$--$2.7\times$ speedups over PSM at $n_T \leq 5{,}000$
widen rapidly at larger scales (\Cref{fig:runtime}).
Because CRM's dominant operations are covariance estimation, Cholesky
factorization, and vectorized matrix--vector products (all standard dense
linear algebra), the implementation is compatible with GPU-accelerated
backends (e.g.\ CuPy, JAX).
A systematic GPU benchmark is left to future work.

\begin{figure}[htbp]
\centering
\includegraphics[width=\linewidth]{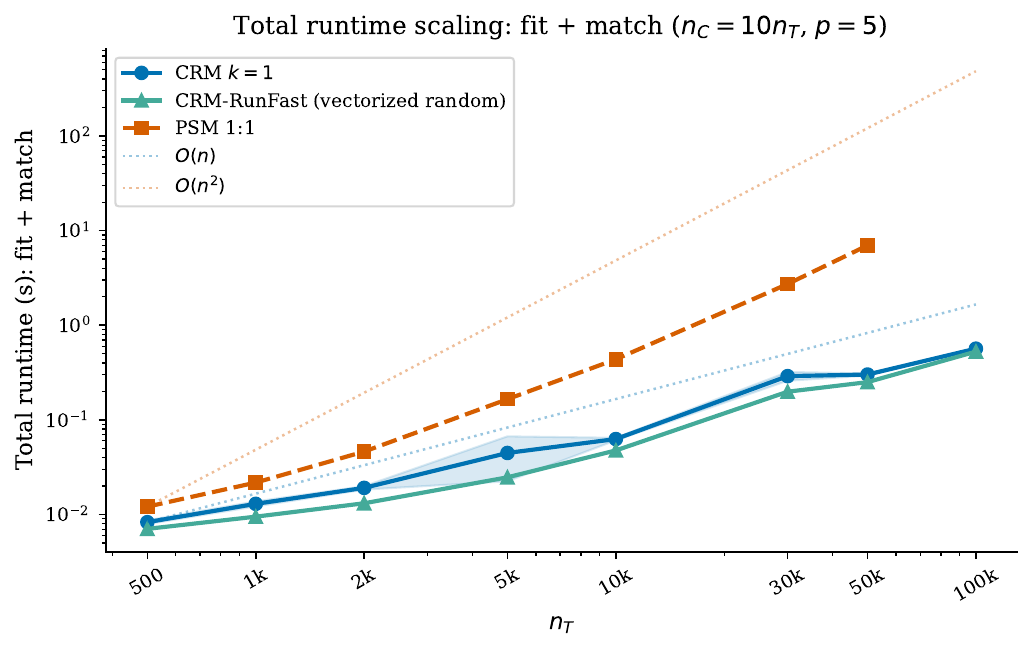}
\caption{Runtime scaling on log--log axes ($n_C=10n_T$; 3 replicates).
The extended sweep adds larger $n_T$ values. CRM follows the $O(n)$ reference
line over this range, whereas the evaluated PSM implementation curves upward.
The comparison describes these implementations; efficient scalar-score
matchers can also exploit sorting.}
\label{fig:runtime}
\end{figure}

\subsection{Conservative Rate Bound}\label{sec:rate}

\noindent\begin{minipage}{\linewidth}
\small\setlength{\fboxsep}{7pt}
\fbox{\begin{minipage}{0.97\linewidth}
\textbf{Assumptions $\Rightarrow$ implication.}\quad
The $O(n_T^{-1/2})$ MSE upper bound in \Cref{prop:rate} requires:
\textbf{SUTVA} (\Cref{ass:sutva}), 
\textbf{representation sufficiency}
$(Y(0),Y(1))\perp T\mid Z(X)$ (\Cref{ass:reprignor}),
and \textbf{Lipschitz smoothness} of $\E[Y(0)\mid Z]$ (\Cref{ass:smooth}).
Under these assumptions, matching in the fixed two-dimensional space $(d,\phi)$
yields a histogram bound governed by representation dimension $d_Z=2$
rather than the original covariate dimension $p$. This gain is conditional on
representation sufficiency; unlike propensity-score sufficiency, it does not
follow from ignorability given $X$.
The assumption most likely to fail in practice is representation sufficiency:
the bias constant
$\Delta(z)$ in \Cref{prop:bias} quantifies residual confounding when this
assumption holds only approximately.
\end{minipage}}
\end{minipage}\medskip

\begin{proposition}[Conservative fixed-dimension rate bound]\label{prop:rate}
Under Assumptions~\ref{ass:sutva}, \ref{ass:reprignor},
and~\ref{ass:smooth}, with $d_Z=2$, $n_C/n_T \to r \in (0,\infty)$,
$h_n\to0$, $n_T h_n^2\to\infty$, and no capacity-induced attrition
($S_M=S$),
the mean squared error of $\taucrm$ satisfies
\begin{equation}
\mathrm{MSE}(\taucrm)
= O\!\left(h_n^2\right) + O\!\left(\frac{1}{n_T h_n^2}\right).
\label{eq:mse}
\end{equation}
Balancing the two terms at $h_n = n_T^{-1/4}$ gives
\begin{equation}
\mathrm{MSE}(\taucrm) = O\!\left(n_T^{-1/2}\right).
\label{eq:rate}
\end{equation}
For comparison, the same conservative histogram argument in the full
$p$-dimensional covariate space gives
$\mathrm{MSE} = O(n_T^{-2/(2+p)})$, which deteriorates rapidly
with $p$.
Under the same conservative histogram bound, the rate ratio is
$n_T^{(p-2)/(2(2+p))}$, which diverges for $p>2$;
for $p=10$ the gain is $n_T^{1/3}$.
\end{proposition}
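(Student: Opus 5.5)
The plan is the usual bias--variance decomposition, applied conditionally on the design (covariates, treatment indicators, cell assignments) and targeting $\tauS$. By Assumption~\ref{ass:reprignor} the representation-bias term of \Cref{thm:bias} vanishes. Because $S_M=S$, the subset-selection term is absent. Writing $\E[\cdot\mid\mathcal{D}]$ for expectation given the design $\mathcal{D}$, I would decompose
\[
\E\bigl[(\taucrm-\tauS)^2\bigr]
= \E\Bigl[\bigl(\E[\taucrm\mid\mathcal{D}]-\tauS\bigr)^2\Bigr]
+ \E\bigl[\mathrm{Var}(\taucrm\mid\mathcal{D})\bigr].
\]
The first term is bounded by $O(h_n^2)$ and the second by $O\bigl((n_Th_n^2)^{-1}\bigr)$. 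A residual sampling error from averaging over the empirical rather than population supported treated distribution is $O(n_T^{-1})$, which is absorbed into the second term.

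\emph{Bias step.} By consistency and Assumption~\ref{ass:reprignor}, $\E[Y_j\mid T_j=0,Z_j]=m_0(Z_j)$ with $m_0(z)=\E[Y(0)\mid Z=z]$. The treated unit $i$ and its matched controls lie in the same cell, so every $Z_j$ with $j\in C_i$ is within distance $h_n$ of $Z_i$. Assumption~\ref{ass:smooth} then gives $|m_0(Z_j)-m_0(Z_i)|\le Lh_n$. Averaging over $S_M$ bounds the conditional bias (relative to the sample analogue of $\tauS$) by $Lh_n$; squaring gives the $O(h_n^2)$ term.

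\emph{Variance step.} Conditionally on $\mathcal{D}$, the outcomes are independent. Suppose the conditional variances are bounded by $\sigma^2$; this is a mild implicit moment condition that I would state explicitly. The treated part then contributes $O(1/|S|)=O(1/n_T)$. The control part is the obstacle: a cell-mean control average is reused by all treated units in that cell, so the variance is of order $\sum_k (n_{T,k}/|S|)^2\,\sigma^2/n_k$. Using $n_k=\min\{|T(k)|,|C(k)|\}$, this is at most $\max_k(1/n_k)\cdot O(1)$.

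The conservative route is to argue that with a two-dimensional grid of side $h_n$ on a bounded support, there are $O(h_n^{-2})$ cells. With $n_C/n_T\to r$ and treated density on $\mathcal{Z}_S$ bounded below, each supported cell then has $n_k\gtrsim n_Th_n^2$ with probability tending to one, by a Chernoff bound plus a union bound over $O(h_n^{-2})$ cells. This is where $n_Th_n^2\to\infty$ is used. It yields $\max_k 1/n_k=O_p\bigl((n_Th_n^2)^{-1}\bigr)$.

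Converting this in-probability statement into an MSE bound, without the rare event of a nearly empty cell inflating the expectation, is the step I expect to be hardest. I would handle it by truncation: on the complement event, bound the error using the crude variance $\sigma^2$ times an exponentially small probability. I would also add the implicit regularity conditions (bounded support of $Z$ on $\mathcal{Z}_S$, density bounded below, cells of comparable diameter) as part of the ``conservative histogram'' framing. Note that cell-level averaging would actually give an $O(1/n_T)$ variance, so the stated bound is indeed conservative.

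\emph{Balancing and comparison.} Minimising $h^2+(n_Th^2)^{-1}$ gives $h^4\asymp n_T^{-1}$, so $h_n=n_T^{-1/4}$ and the MSE is $O(n_T^{-1/2})$. For the $p$-dimensional comparison, the same argument with $O(h^{-p})$ cells gives a variance of $O\bigl((n_Th^p)^{-1}\bigr)$. Balancing $h^2$ against $(n_Th^p)^{-1}$ gives $h\asymp n_T^{-1/(2+p)}$ and MSE $O\bigl(n_T^{-2/(2+p)}\bigr)$. The ratio of the two rates is $n_T^{1/2-2/(2+p)}=n_T^{(p-2)/(2(2+p))}$, which equals $n_T^{1/3}$ at $p=10$.
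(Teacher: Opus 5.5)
Your proposal is correct and follows the paper's own proof: a histogram-type bias--variance split with bias $O(h_n)$ from Lipschitz smoothness within cells, a worst-cell variance bound of order $(n_Th_n^2)^{-1}$, and the same balancing and $p$-dimensional comparison arithmetic. Your conditioning on the design, Chernoff/truncation argument and explicit regularity conditions make rigorous what the paper only asserts through average cell counts $n_Ch_n^{d_Z}f_Z(z_k)$.

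One small caveat: the union bound over $O(h_n^{-2})$ cells actually needs $n_Th_n^2\gtrsim\log n_T$, not just $n_Th_n^2\to\infty$ (harmless at $h_n=n_T^{-1/4}$). Your own remark that cell averaging gives $O(1/n_T)$ variance avoids that step altogether, since $S_M=S$ forces $n_k=|T(k)|$ in every supported cell.
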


\begin{proof}
See \Cref{app:proofs}.
\end{proof}

\begin{remark}
The $O(n_T^{-1/2})$ result is the standard conservative histogram upper
bound in two dimensions. CRM obtains this fixed representation dimension
without fitting a treatment model, but only under the stronger
representation-sufficiency assumption.
The comparison with the full-dimensional upper bound is a fixed-dimension result
for $d_Z=2$ under Assumptions~\ref{ass:sutva}, \ref{ass:reprignor},
and~\ref{ass:smooth}: the Fisher direction's alignment determines the
\emph{bias constant} (via $\Delta(z)$), not the $n_T$ exponent in the
convergence rate.
\end{remark}

\begin{figure}[htbp]
\centering
\includegraphics[width=0.62\linewidth]{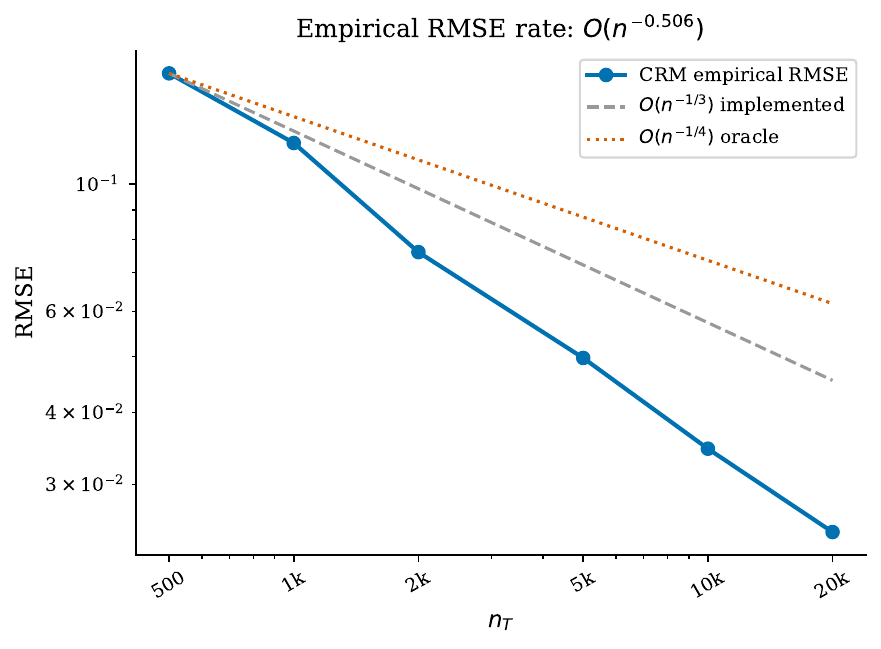}
\caption{Empirical RMSE of $\taucrm$ vs.\ $n_T$ on log--log axes
($S$\ref{sc:typical}; 100 replications).
The fitted slope is $\approx-0.51$ ($\mathrm{MSE}\approx O(n_T^{-1})$),
faster than the conservative bound of Proposition~\ref{prop:rate}
(slope $-1/4$ on RMSE). The fitted slope describes these simulated designs;
it is not asserted as a general CRM rate.}
\label{fig:rate-empirical}
\end{figure}

\begin{remark}[Reconciliation with empirical scaling]
The $O(n_T^{-1/2})$ MSE result in \Cref{prop:rate} is a conservative upper bound:
it balances cell-approximation (squared-bias) against within-cell variance
at the oracle bandwidth $h_n^\star=n_T^{-1/4}$.
Empirically (\Cref{fig:rate-empirical}) the RMSE slope is $\approx-0.51$,
i.e.\ $\mathrm{MSE}\approx O(n_T^{-1})$, \emph{faster} than the bound.
One explanation is a variance-dominated regime: the
Freedman--Diaconis bandwidth actually used scales as $h_n\sim n_T^{-1/3}$
(not the oracle $n_T^{-1/4}$), and with the large control pools used here
($n_C/n_T\in\{10,50\}$), averaging cell means can produce substantially
smaller variance than the worst-cell bound used in the proposition. The
observed slope is therefore reported as empirical behavior, not as a sharper
theorem.
\end{remark}

\subsection{Summary of Theoretical Principles}\label{sec:theory-summary}

The theoretical results highlight three organizing principles.

\begin{enumerate}
\item \textbf{(C1) Representation-based estimation.}
Identification depends on whether $Z(X)$ captures confounding structure,
not on balance in $X$-space.
The Fisher direction maximizes treated--control mean separation in whitened
space; whether that direction is sufficient for causal identification remains
an explicit assumption. Conditional on that assumption, the
$O(n_T^{-1/2})$ MSE bound is independent of $p$ because the representation
dimension is fixed.

\item \textbf{(C2) Bias--balance decoupling.}
The four-part estimation error decomposition (\Cref{thm:bias}) shows that
error has distinct representation, support-restriction, cell-approximation,
and stochastic components, none of which are captured by MaxSMD, explaining
why CEM achieves near-zero MaxSMD yet the highest estimation error on the
Criteo benchmark (\Cref{sec:criteo}).

\item \textbf{(C3) Explicit overlap characterization.}
The shortage fraction $\pi$ bounds the gap between the full ATT and the
estimand actually identified (\Cref{thm:estimand-gap,cor:gap-bound}).
CRM operationalizes this through $\hat\pi$, reported before any unit is
discarded.
\end{enumerate}

\section{Simulation Study}\label{sec:simulation}

\subsection{Design}\label{sec:sim-design}

We evaluate CRM under four data-generating processes (DGPs) spanning
different combinations of dimensionality, covariance structure, and
centroid shift.
All DGPs use the outcome model
$Y_i = X_i^\top\beta + 5.0\,T_i + \varepsilon_i$,
$\varepsilon_i\overset{\mathrm{iid}}{\sim}\mathcal{N}(0,1)$,
$\beta_1=2$, $\beta_2=1$, $\beta_j\overset{\mathrm{iid}}{\sim}\mathcal{N}(0,0.09)$
for $j\geq3$, and $100$ independent replications per cell.
We vary $n_T \in \{500, 1{,}000, 5{,}000, 10{,}000\}$ to expose the
sample-size dependence of each method.

\begin{enumerate}[label=S\arabic*.,ref=S\arabic*]
\item\label{sc:highdim} \textbf{High-dimensional} ($p=20$, spherical,
shift $0.4$, $n_C/n_T=50$).
This scenario tests a fixed two-coordinate CRM representation when the
original covariate dimension is larger. It does not presume that CRM will
outperform a correctly specified propensity-score model.

\item\label{sc:ellip} \textbf{Strong elliptical} ($p=10$, AR(1)
$\rho=0.9$, shift $0.5$, $n_C/n_T=50$).
Both CRM variants use Cholesky whitening. Their comparison isolates the
incremental value of the Fisher coordinate after covariance adjustment
under strong correlation.

\item\label{sc:typical} \textbf{Typical observational study} ($p=10$,
spherical, shift $0.5$, $n_C/n_T=10$).
This is the baseline scenario with a realistic control-pool size.
With the corrected collision-free PSM baseline, CRM is not expected to win
either MaxSMD or CRMSE at moderate $n_T$; instead, this scenario illustrates
how the Fisher direction improves over radial-only CRM while retaining an
explicit supported-sample diagnostic.

\item\label{sc:hostile} \textbf{Structural overlap failure} ($p=15$,
spherical, shift $1.0$, $n_C/n_T=50$).
Both CRM and PSM suffer severe attrition under extreme centroid separation.
This scenario illustrates where the pre-matching shortage diagnostic
is most valuable: $\hat\pi > 0.70$ signals the overlap problem before any
matching is attempted, allowing the analyst to decide whether to proceed
or expand the control pool.
\end{enumerate}

Methods compared: CRM $k=1$, CRM $k=0$ (radial-only), PSM 1:1, and
Entropy Balancing \citep[EB;][]{hainmueller2012}, a moment-matching
weighting estimator that represents the predominant non-matching approach
in applied causal inference.
\Cref{tab:baseline-config} specifies all hyperparameters and how they
were chosen.
All methods are evaluated on identical samples; bias is computed relative
to the known true ATT, and standard errors are estimated by bootstrap with
$500$ replications.

\begin{table}[htbp]
\centering\small
\caption{Baseline configuration and tuning protocol.}
\label{tab:baseline-config}
\begin{tabular}{llp{0.45\linewidth}}
\toprule
\textbf{Method} & \textbf{Hyperparameter} & \textbf{Setting and rationale} \\
\midrule
CRM $k=1$ & Bin method & Freedman--Diaconis rule; robust to non-normality \\
& $K_d$ floor/ceil & 10 / 200; prevents degenerate cells \\
& Covariance & Treated-group sample covariance (targets ATT) \\
PSM 1:1 & Caliper & $0.2\,\mathrm{SD}(\mathrm{logit\,PS})$; \citet{austin2011} \\
& PS model & Logistic regression, max\_iter=1000, no regularization \\
& Matching order & Descending PS (reduces caliper exclusions) \\
EB & Moments balanced & Means of all $p$ covariates \\
& Solver & \texttt{scipy.optimize.minimize} (L-BFGS-B) \\
& Estimand & ATT reweighting (control weighted to treated moments) \\
& Scalability note & $O((n_T+n_C)p)$ per iteration; feasible in all tested settings \\
\bottomrule
\end{tabular}
\end{table}
The headline metric is
\[
\mathrm{CRMSE} = \sqrt{\mathrm{MaxSMD}^2 + (1-\mathrm{Retention})^2},
\]
which captures the balance--retention trade-off as a Euclidean distance
from the ideal $(0,1)$ in the (MaxSMD, Retention) plane.
For covariate $j$, the standardized mean difference is
\[
\mathrm{SMD}_j
= \frac{|\bar X_{T,j}-\bar X_{C,j}|}
{\sqrt{(s^2_{T,j}+s^2_{C,j})/2}},
\qquad
\mathrm{MaxSMD}=\max_j \mathrm{SMD}_j .
\]
MaxSMD \citep{austin2009} is reported to allow direct comparison with
literature benchmarks; a lower CRMSE reflects a balance--retention tradeoff
and does not imply universal dominance on MaxSMD, which PSM can improve by
restricting the matched sample.
CRMSE is therefore a descriptive summary, not a causal loss function or a
claim that balance and retention should always receive equal weight. We
report MaxSMD and retention separately in every main table and interpret
CRMSE only together with its two components.
For transparency, \Cref{app:frontier} reports the uncollapsed
$(\mathrm{MaxSMD},\mathrm{Retention})$ plane, following the
balance--sample-size frontier perspective of \citet{king2017}. It confirms
that CRM is not frontier-dominant at moderate $n_T$; the plot is a sensitivity
display for CRMSE, not evidence of CRM superiority.

\subsection{Results}\label{sec:sim-results}

\paragraph{Correction to the propensity-score baseline.}\label{sec:psm-correction}
The simulation baseline in the previous version of this paper used a naive
1:1 nearest-neighbor propensity-score matcher that, on caliper ties and
already-used controls, silently discarded the affected treated units.
This depressed PSM retention (to $71\%$ in S\ref{sc:typical} and as low as $21\%$ in
S\ref{sc:hostile}) and inflated its CRMSE, making CRM appear to win the
balance--retention tradeoff.
We replace it throughout with a \emph{collision-free} sorted-score matcher
that searches for the nearest currently available control within the caliper, rather than dropping a treated unit after a collision with an already-used control.
Under this correction PSM retention rises sharply (e.g.\ $71\%\to93\%$ in
S\ref{sc:typical}), and the corrected baseline, together with Entropy
Balancing, attains a stronger balance--retention tradeoff than CRM in the
moderate-$n_T$ synthetic scenarios.
We report the corrected numbers below and reframe CRM's contribution
accordingly: in moderate-size synthetic settings, corrected pairwise and
weighting baselines often achieve stronger balance--retention tradeoffs,
and CRM's advantage appears most clearly in computational scaling, explicit
support diagnostics, and large-scale applications.

\Cref{tab:sim_results} reports results at two representative sample sizes
($n_T = 1{,}000$ and $n_T = 5{,}000$); \Cref{fig:simulation} shows the
full CRMSE curves across all four sample sizes.

\begin{table}[tp]
\centering
\small
\setlength{\tabcolsep}{4pt}
\renewcommand{\arraystretch}{1.05}
\caption{Simulation results ($100$ replications; $n_C/n_T \in \{10,50\}$;
means reported).
$\mathrm{CRMSE}=\sqrt{\mathrm{MaxSMD}^2+(1-\mathrm{Retention})^2}$.
$^{\star}$~denotes the lowest CRMSE in each scenario-$n_T$ column.
PSM here is the \emph{collision-free} sorted-score matcher
(\Cref{sec:psm-correction}), which does not silently drop treated units on
caliper collisions; with this correction PSM retains substantially more
units than the naive 1:1 implementation and attains the best
balance--retention tradeoff among matching estimators in S1--S3.
EB = Entropy Balancing \citep{hainmueller2012}.
In S1--S3 (adequate overlap), EB or collision-free PSM achieves the lowest
CRMSE; CRM $k=1$ is competitive but does not lead on this metric at moderate
$n_T$.
In S\ref{sc:hostile} (structural overlap failure) no method achieves
acceptable balance and retention simultaneously, and CRM's pre-matching
shortage diagnostic flags the support problem before matching.}
\label{tab:sim_results}
\begin{tabular}{llccccc}
\toprule
\textbf{Scenario} & \textbf{Method} &
\multicolumn{3}{c}{$n_T = 1{,}000$} &
\multicolumn{2}{c}{$n_T = 5{,}000$} \\
\cmidrule(lr){3-5}\cmidrule(lr){6-7}
& & \textbf{MaxSMD} & \textbf{Ret.} & \textbf{CRMSE} &
  \textbf{MaxSMD} & \textbf{CRMSE} \\
\midrule
\multirow{4}{*}{\ref{sc:highdim} High-dim}
  & CRM $k=1$            & $0.166$ & $96.4\%$ & $0.170$ & $0.065$ & $0.086$ \\
  & CRM $k=0$            & $0.442$ & $100.0\%$ & $0.442$ & $0.396$ & $0.396$ \\
  & PSM (collision-free) & $0.060$ & $97.5\%$  & $0.065$ & $0.030$ & $0.037^{\star}$ \\
  & EB                   & $0.022$ & $100.0\%$ & $0.022^{\star}$ & $0.013$ & $0.013$ \\
\midrule
\multirow{4}{*}{\ref{sc:ellip} Elliptical}
  & CRM $k=1$            & $0.077$ & $100.0\%$ & $0.077$ & $0.032$ & $0.032$ \\
  & CRM $k=0$            & $0.539$ & $100.0\%$ & $0.539$ & $0.501$ & $0.501$ \\
  & PSM (collision-free) & $0.030$ & $100.0\%$  & $0.030^{\star}$ & $0.012$ & $0.012^{\star}$ \\
  & EB                   & $0.066$ & $100.0\%$ & $0.066$ & $0.053$ & $0.053$ \\
\midrule
\multirow{4}{*}{\ref{sc:typical} Typical}
  & CRM $k=1$            & $0.126$ & $86.0\%$ & $0.189$ & $0.051$ & $0.157$ \\
  & CRM $k=0$            & $0.487$ & $100.0\%$ & $0.487$ & $0.460$ & $0.460$ \\
  & PSM (collision-free) & $0.050$ & $92.7\%$  & $0.088$ & $0.034$ & $0.080$ \\
  & EB                   & $0.016$ & $100.0\%$ & $0.016^{\star}$ & $0.009$ & $0.009^{\star}$ \\
\midrule
\multirow{4}{*}{\ref{sc:hostile} Overlap failure}
  & CRM $k=1$            & $0.377$ & $26.7\%$ & $0.824$ & $0.183$ & $0.767$ \\
  & CRM $k=0$            & $0.773$ & $95.3\%$ & $0.774$ & $0.730$ & $0.731$ \\
  & PSM (collision-free) & $0.118$ & $31.4\%$  & $0.696^{\star}$ & $0.070$ & $0.687^{\star}$ \\
  & EB                   & $0.729$ & $100.0\%$ & $0.729$ & $0.536$ & $0.536$ \\
\bottomrule
\end{tabular}
\end{table}

\begin{figure}[htbp]
\centering
\includegraphics[width=\linewidth]{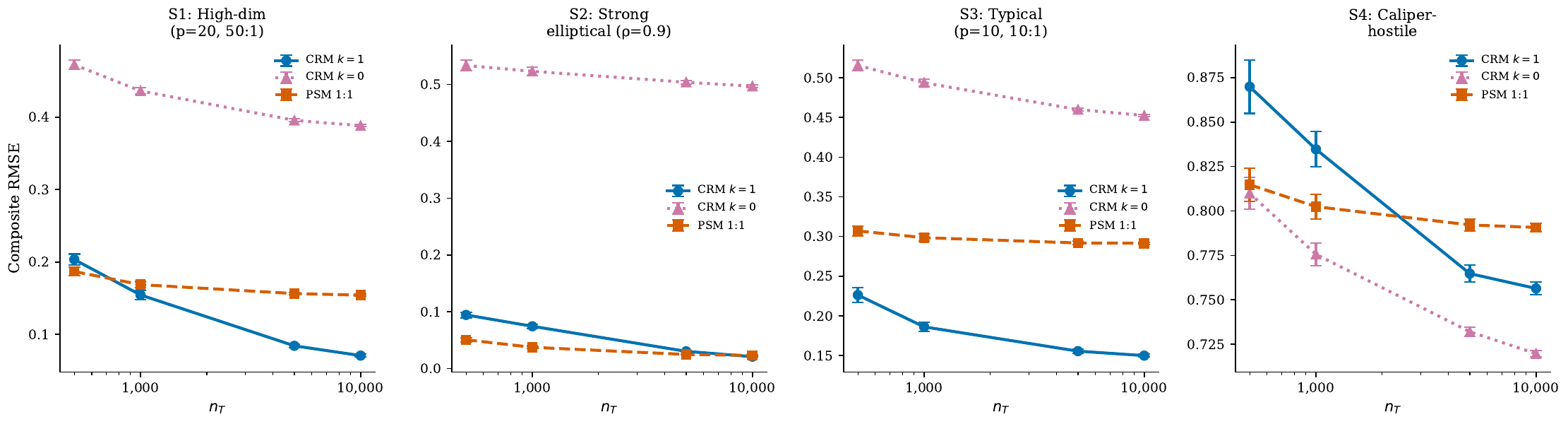}
\caption{Composite RMSE vs.\ $n_T$ across four scenarios (100 replications;
mean $\pm$ 95\% CI).
With the collision-free PSM baseline, EB or PSM attains the lowest CRMSE in
the adequate-overlap scenarios; CRM $k=1$ is competitive but does not lead
on this metric at moderate $n_T$.
In S\ref{sc:hostile} (structural overlap failure), no method achieves
acceptable balance and retention simultaneously; the pre-matching
diagnostic flags the support problem before matching.}
\label{fig:simulation}
\end{figure}

\paragraph{Effect of the Fisher coordinate (C1).}
At $n_T=1{,}000$, CRM $k=0$ has MaxSMD of $0.442$, $0.539$, and $0.487$
in Scenarios \ref{sc:highdim}--\ref{sc:typical}; adding $\phi$ lowers these
values to $0.166$, $0.077$, and $0.126$, respectively.
The radial summary alone cannot correct for systematic centroid shift;
the Fisher direction reduces imbalance along the mean-shift axis, and this
improvement persists at all tested $n_T$.

\paragraph{Where CRM stands among the baselines.}
With the collision-free PSM and Entropy Balancing, CRM $k=1$ is no longer
the CRMSE leader in any synthetic scenario: EB wins S\ref{sc:highdim} and
S\ref{sc:typical} (CRMSE $0.022$, $0.016$), and collision-free PSM wins
S\ref{sc:ellip} and S\ref{sc:hostile}.
CRM $k=1$ remains competitive (within the same order of magnitude as the
best matcher in S\ref{sc:highdim}--S\ref{sc:ellip}), but its value in these
moderate-$n_T$ regimes is not best-in-class balance.
Consistent with the core scope of the method, CRM's distinctive
contributions are computational ($O(np^2+p^3+n\log n)$ implemented cost, no
pairwise search;
\Cref{fig:runtime}), diagnostic (the pre-matching shortage fraction
$\hat\pi$), and large-scale (\Cref{sec:criteo}), rather than dominance on
finite-sample balance.
The structural-overlap scenario (S\ref{sc:hostile}) reinforces the need for
a diagnostic: EB retains all units but remains badly imbalanced
(MaxSMD $=0.729$), while CRM's $\hat\pi>0.70$ flags a severe shortage in
the CRM representation before matching.
In practice we recommend CRM when overlap may be limited ($\hat\pi>0$) and
explicit estimand characterization is needed, or when the available matcher
is too slow at the required scale; EB or collision-free PSM when
overlap is adequate, $n_T$ is moderate, and balance is the primary goal.

\paragraph{The role of the Fisher direction in S\ref{sc:ellip}.}
Under strong elliptical covariance ($\rho=0.9$, $n_C/n_T=50$),
collision-free PSM and EB both achieve lower CRMSE than CRM $k=1$ at the
sample sizes tested.
The key message in this scenario is internal to CRM: $k=0$ has high
imbalance (MaxSMD $=0.54$), while $k=1$ lowers MaxSMD to $0.077$ at
$n_T=1{,}000$ and $0.032$ at $n_T=5{,}000$,
isolating the benefit of the Fisher coordinate because both variants use
the same whitening and binning steps.

\paragraph{Structural overlap failure (S\ref{sc:hostile}).}
Under a centroid shift of $1.0$ with $p=15$, both CRM and PSM discard
roughly $75$--$79\%$ of treated units, and no method achieves MaxSMD
$< 0.10$ alongside Retention $> 30\%$.
This scenario illustrates that the pre-matching shortage diagnostic
($\hat\pi > 0.70$) flags severe lack of support in the CRM representation
before matching is attempted.
Practitioners facing such diagnostics should expand the control pool or
explicitly restrict the estimand rather than proceeding with matching.

\paragraph{Observed scaling of imbalance.}
A notable feature visible in \Cref{fig:selfimproving} is that CRM's MaxSMD
decreases monotonically with $n_T$ (from $0.150$ at $n_T=1{,}000$ to
$0.047$ at $n_T=10{,}000$ in S\ref{sc:highdim}), coinciding with more
controls per bin and smaller within-cell variation. PSM does not show the
same decline over this range.
This pattern is consistent with reduced within-cell variation as the grid
refines; it is an empirical trend in these DGPs, not a claim of universal
monotonic improvement.

\begin{figure}[htbp]
\centering
\includegraphics[width=\linewidth]{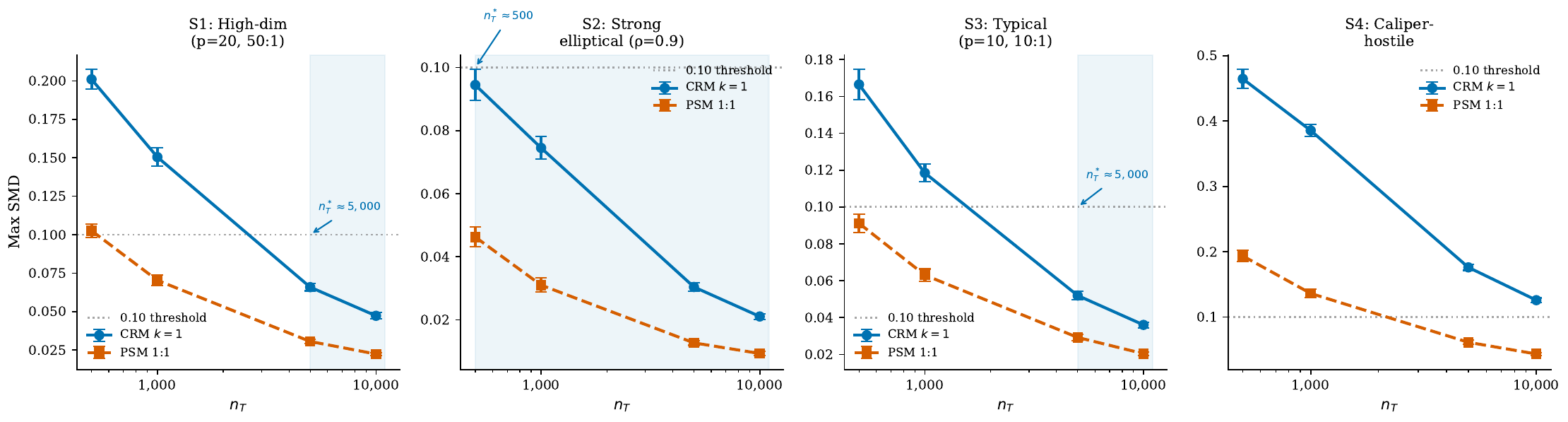}
\caption{MaxSMD vs.\ $n_T$ for CRM $k=1$ and PSM across four scenarios
(100 replications; mean $\pm$ 95\% CI).
CRM's MaxSMD decreases over the tested range, while the PSM curves are
comparatively flat. These are empirical trends for the specified DGPs.
Shaded region marks where CRM first achieves MaxSMD $\leq 0.10$.}
\label{fig:selfimproving}
\end{figure}

\subsection{Ablation Study}\label{sec:ablation}

Three ablations isolate the main design choices. First, removing the Fisher
coordinate is damaging: in Scenario~S3 at $n_T=1{,}000$, MaxSMD increases
from $0.126$ to $0.487$.
Second, the observed decline in imbalance is not unique to the
Freedman--Diaconis binning rule; Scott's rule and fixed bin counts show the
same monotone decline in MaxSMD as $n_T$ grows, although FD is the most
stable default. Third, within-cell nearest-neighbor refinement helps only
in small samples: at $n_T=200$ it lowers MaxSMD by about $24\%$ with
negligible overhead, whereas by $n_T=5{,}000$ the balance gain is below
$1\%$ and CRM-random is preferable. A nonlinear failure-mode experiment,
reported in the supplement, confirms the expected limitation: when
confounding is driven by $X_1^2$ and is orthogonal to the centroid shift,
the linear Fisher coordinate is insufficient and the Rayleigh statistic
correctly flags an uninformative direction.

\subsection{Comparison with PCA-Based Matching}\label{sec:pca-empirical}

To position CRM against the natural ``compress-then-match'' baseline
(\Cref{sec:dr-relation}), we add PCA-95\%+NN (retain principal components
explaining $95\%$ of covariate variance, then 1:1 nearest-neighbor
matching in that subspace) to three data-generating processes spanning
linear and covariance-driven confounding (50 replications each).
\Cref{tab:pca_comparison} and \Cref{fig:pca-main} report this comparison.

\begin{table}[htbp]
\centering
\small
\setlength{\tabcolsep}{5pt}
\renewcommand{\arraystretch}{1.05}
\caption{CRM vs.\ PCA-95\%+NN vs.\ PSM 1:1 across three confounding
mechanisms (50 replications; means).
$|\text{Bias}|$ is absolute ATT error; $t$ is total fit$+$match time.
PCA-matching wins only when confounding is driven by second moments
(S5), where its covariance-sensitive representation is advantageous; under linear
confounding (S1, S3) CRM achieves $2$--$3\times$ lower bias at an order of
magnitude lower cost.
S5 also shows a disconnect between first-moment balance and bias
(MaxSMD $\approx0.05$--$0.06$ yet $|\text{Bias}|\approx2.6$--$3.0$): the
confounding is invisible to first-moment balance.}
\begin{tabular}{llcccc}
\toprule
\textbf{Scenario} & \textbf{Method} & \textbf{$|$Bias$|$} &
\textbf{MaxSMD} & \textbf{Ret.} & \textbf{$t$ (s)} \\
\midrule
\multirow{3}{*}{S1: Linear, $p=20$}
  & CRM $k=1$    & $\mathbf{0.203}$ & $0.147$ & $78.0\%$ & $\mathbf{0.029}$ \\
  & PCA-95\%+NN  & $0.522$ & $0.292$ & $100.0\%$ & $0.090$ \\
  & PSM 1:1      & $0.067$ & $0.069$ & $86.3\%$ & $0.079$ \\
\midrule
\multirow{3}{*}{S3: Linear, $p=10$}
  & CRM $k=1$    & $\mathbf{0.140}$ & $0.124$ & $85.6\%$ & $\mathbf{0.026}$ \\
  & PCA-95\%+NN  & $0.394$ & $0.165$ & $100.0\%$ & $0.230$ \\
  & PSM 1:1      & $0.061$ & $0.052$ & $92.5\%$ & $0.053$ \\
\midrule
\multirow{3}{*}{S5: Covariance-driven}
  & CRM $k=1$    & $2.575$ & $0.061$ & $99.9\%$ & $\mathbf{0.024}$ \\
  & PCA-95\%+NN  & $\mathbf{0.120}$ & $0.031$ & $100.0\%$ & $0.222$ \\
  & PSM 1:1      & $3.023$ & $0.050$ & $99.8\%$ & $0.027$ \\
\bottomrule
\end{tabular}
\label{tab:pca_comparison}
\end{table}

\begin{figure}[htbp]
\centering
\includegraphics[width=\linewidth]{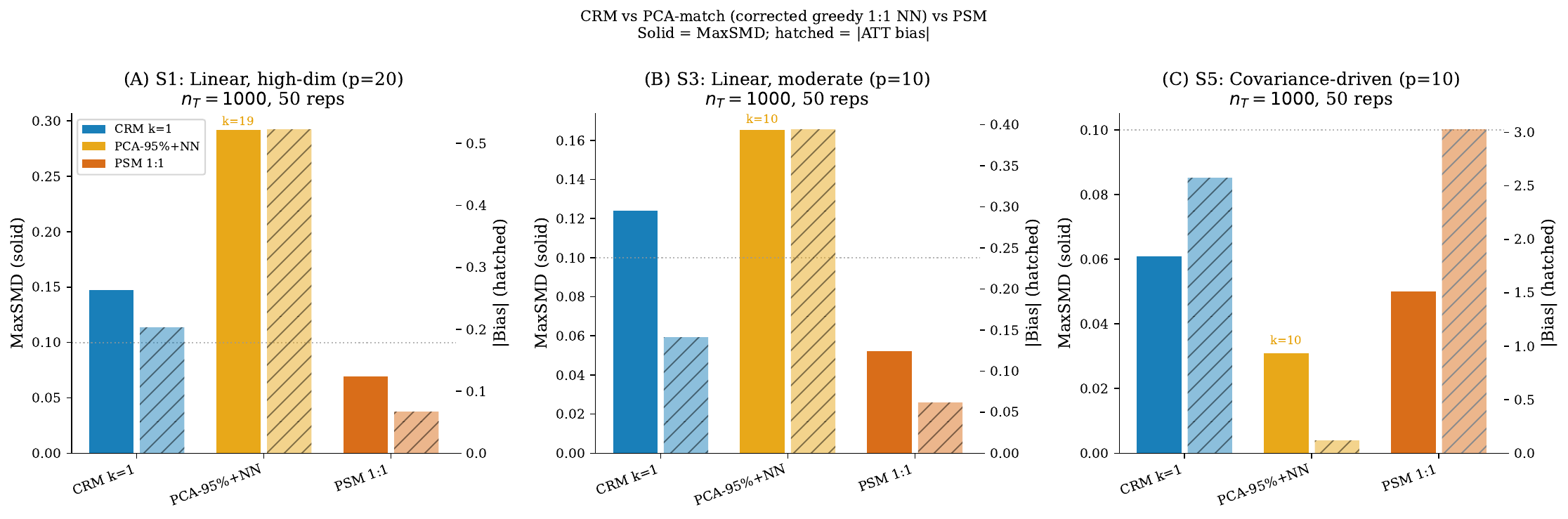}
\caption{CRM vs.\ PCA-95\%+NN across confounding mechanisms.
CRM achieves lower bias under linear confounding (S1, S3); PCA-matching is
preferable under covariance-driven confounding (S5), where both CRM and PSM
have low MaxSMD but large bias.
PCA balance is also fragile to the variance-retention threshold
(\Cref{fig:pca-sens}, \Cref{app:pca-sens}) and is far slower because it
retains the matching step's pairwise search.}
\label{fig:pca-main}
\end{figure}

\paragraph{Takeaways.}
Three points follow.
(i)~Under linear confounding, CRM's supervised Fisher direction beats PCA's
unsupervised projection on bias by $2$--$3\times$; in these designs, the
treated--control mean-shift direction is more informative for matching than
the highest-variance directions.
(ii)~PCA-matching wins under covariance-driven confounding (S5), a setting
CRM does not target and where \emph{first-moment balance is uninformative}
for all matching methods. Criteo shows a related disconnect between marginal
balance and estimation error.
(iii)~PCA-matching is $3$--$10\times$ slower because it retains pairwise
nearest-neighbor search; CRM's advantage is the absence of that search, not
a better projection per se.
A sensitivity analysis (\Cref{app:pca-sens}) shows PCA balance varies
non-monotonically with the variance threshold and the number of retained
components, whereas CRM has no comparable tuning knob.

\section{Large-Scale Criteo Benchmark}\label{sec:criteo}

\subsection{Data and Benchmark Construction}\label{sec:criteo-design}

We use the Criteo Uplift Dataset \citep{diemert2018}: $\approx13.98$ million
observations from a randomized digital advertising campaign with binary
treatment (ad exposure) and binary outcome (store visit).
This benchmark tests behavior at scales where classical pairwise matching
becomes computationally infeasible: at $n_T=11.9$M, PSM was not
attempted and CRM completed in 6.8 minutes on a single CPU core.
The known ground truth from the full randomized data is
\begin{equation}
\mathrm{ATE}_{\mathrm{true}}
= \E[Y\mid T=1]-\E[Y\mid T=0]
= 0.01034 \quad (\mathrm{SE}=0.00132).
\label{eq:true-ate}
\end{equation}

An observational benchmark is constructed by applying a covariate-dependent
selection rule to the treated group:
\begin{equation}
\Prob(\mathrm{retain}\mid T=1, X)
\propto \sigma\!\bigl(\alpha(f_0+f_3)\bigr),
\label{eq:selection}
\end{equation}
where $\sigma(\cdot)$ is the logistic function and $\alpha \geq 0$ controls
confounding severity.
A random control sample of $n_C=2{,}000{,}000$ is drawn without confounding.
At $\alpha=4$ the naive treated-minus-control estimate is more than
$16\,\mathrm{SE}$ units from the truth.

\begin{center}
\noindent\begin{minipage}{0.96\linewidth}
\small\setlength{\fboxsep}{8pt}
\fbox{\begin{minipage}{0.94\linewidth}
\textbf{Interpretive caveat.}
The induced confounding in \Cref{eq:selection} uses a known single-index
structure based on $f_0+f_3$.
Results should be interpreted as a controlled stress test under this
specific mechanism, not as validation under arbitrary confounding.
\end{minipage}}
\end{minipage}
\end{center}

\paragraph{Compute environment.}
All timing results were obtained on a single CPU core of an Intel Xeon
E5-2690 v4 (2.60~GHz base), with 64~GB RAM, running Ubuntu 22.04 and
Python 3.11.4 (numpy 1.26.1, scikit-learn 1.3.2).
Wall-clock times exclude dataset download; data were fully resident in
memory before timing began.
The $6.8$~minute figure for $n_T=11.9$M refers to the compute-kernel
time only (covariance, whitening, binning, and sampling), measured after
loading the Criteo CSV into a pandas DataFrame.

\subsection{Main Results ($\alpha=2.0$, $n_T=30{,}000$)}\label{sec:criteo-main}

\begin{table}[htbp]
\centering
\small
\setlength{\tabcolsep}{4pt}
\renewcommand{\arraystretch}{1.05}
\caption{Criteo observational benchmark ($\alpha=2.0$; $n_T=30{,}000$;
5 random seeds; mean $\pm$ 95\% CI\@).
$|\mathrm{Bias}|/\mathrm{SE}$: absolute ATE error divided by the true ATE
SE ($=0.00132$) from the full $13.98$M randomized dataset.
Max log-VR: $\max_j\bigl|\log(\hat\sigma^2_{T,j}/\hat\sigma^2_{C,j})\bigr|$
across $j=1,\ldots,p$ covariates.
Bold: best per column.
CEM achieves the lowest balance metrics across every column yet the
highest estimation error, showing that marginal balance rankings and
estimation-error rankings can differ.}
\begin{adjustbox}{max width=\textwidth}
\begin{tabular}{lcccccc}
\toprule
\textbf{Method} &
\textbf{MaxSMD} & \textbf{Mean SMD} & \textbf{Max log-VR} &
\textbf{$|\text{Bias}|$/SE} & \textbf{Retention} & \textbf{RT (s)} \\
\midrule
Naive (unmatched)
  & $0.818$ & --- & --- & --- & $100\%$ & --- \\[2pt]
CRM $k=1$
  & $0.012\pm0.001$ & $0.005\pm0.001$ & $0.191\pm0.082$
  & $54.5\pm7.3$ & $99.5\%$ & $\mathbf{0.5}$ \\
CRM-CAM
  & $0.011\pm0.001$ & $0.005\pm0.001$ & $0.134\pm0.052$
  & $54.3\pm7.8$ & $99.5\%$ & $\mathbf{0.5}$ \\
PS-Subclass
  & $0.015\pm0.003$ & $0.006\pm0.001$ & $0.187\pm0.065$
  & $51.2\pm8.9$ & $100.0\%$ & $4.8$ \\
PSM 1:1
  & $0.020\pm0.006$ & $0.006\pm0.001$ & $0.454\pm0.198$
  & $\mathbf{50.6}\pm6.1$ & $100.0\%$ & $7.0$ \\
CEM
  & $\mathbf{0.002}\pm0.001$ & $\mathbf{0.001}\pm0.000$ & $\mathbf{0.070}\pm0.024$
  & $66.5\pm6.2$ & $91.5\%$ & $21.9$ \\
FLAME-lite
  & $0.014\pm0.001$ & $0.004\pm0.001$ & $0.065\pm0.017$
  & $50.8\pm6.5$ & $99.9\%$ & $94.9$ \\
\bottomrule
\end{tabular}
\end{adjustbox}
\label{tab:criteo_alpha2}
\end{table}

\paragraph{Marginal balance and estimation error.}
\Cref{tab:criteo_alpha2} reports the main Criteo benchmark.
CEM achieves MaxSMD $= 0.002$ (eight times lower than PSM's $0.020$) 
and dominates every standard balance criterion.
Yet CEM records the highest estimation error:
$|\mathrm{Bias}|/\mathrm{SE} = 66.5$ versus $50.6$--$54.5$ for all other
methods.
Despite achieving near-zero MaxSMD, CEM exhibits substantial bias,
highlighting that marginal covariate balance alone does not guarantee
unbiased treatment effect estimation.
This reversal is consistent with the distinction formalized for CRM in
\Cref{prop:bias}: MaxSMD measures marginal balance in $X$, whereas estimation
error can also reflect information discarded by a matching representation
and changes in the retained population. The theorem is specific to CRM; the
CEM result is an empirical example of the broader distinction.

\paragraph{CRM balance and variance ratio.}
At the headline configuration CRM $k=1$ attains MaxSMD $=0.012$, comparable
to the other matchers, while retaining $99.5\%$ of treated units.
Its Max log-VR ($0.191\pm0.082$) is competitive with the collision-free
baselines and substantially lower than the naive 1:1 PSM, indicating that
CRM controls second-moment balance without the variance inflation that
caliper-based exclusion can induce.
We caution that the strongest, most robust Criteo differentiator is
computational rather than a strict balance win (below): once the PSM
baseline is made collision-free, its marginal balance becomes comparable to
CRM's (see the verification paragraph), so we frame CRM here as
\emph{competitive at scale and far cheaper}, not as uniformly dominant on
MaxSMD.

\paragraph{Verification with the collision-free PSM.}
Because the simulation correction (\Cref{sec:psm-correction}) changed the
PSM baseline, we re-ran the Criteo comparison using the collision-free
sorted-score matcher over the full scaling grid reported below.
On Criteo's good-overlap regime the correction mainly changes the
interpretation of retention: sorted PSM retains essentially all treated
units, so CRM's advantage is not due to an implementation artifact in the
baseline. Instead, the corrected comparison shows that CRM remains
competitive on bias and balance while retaining its main design advantage:
it does not solve a nearest-neighbor assignment problem. The evaluated
sorted-score PSM is nevertheless an efficient scalar matcher, so the runtime
gap below should be interpreted as an implementation benchmark rather than a
lower bound for all PSM algorithms.
Across the full corrected grid (\Cref{tab:criteo-grid}; $\alpha\in\{0.5,2,4\}$,
$n_T$ up to $200{,}000$, three seeds, 36 cells), CRM $k=1$ attains lower MaxSMD
than the collision-free PSM in $31$ of $36$ cells and is faster in all $36$,
with a median speedup of $8.5\times$. We read this as a consistent small
balance edge combined with a large and uniform computational advantage, not as
a claim of strict balance dominance.

\begin{table}[htbp]\centering\small
\setlength{\tabcolsep}{6pt}\renewcommand{\arraystretch}{1.1}
\caption{Criteo corrected grid: CRM $k=1$ vs.\ collision-free PSM across $\alpha\in\{0.5,2,4\}$, $n_T\in\{10\text{k},30\text{k},100\text{k},200\text{k}\}$, 3 seeds (36 cells). CRM attains lower MaxSMD in 31/36 cells and is faster in all 36, with median speedup $8.5\times$.}
\label{tab:criteo-grid}
\begin{tabular}{lcccc}
\toprule
$\alpha$ & cells & CRM lower MaxSMD & median MaxSMD (CRM / PSM) & median speedup \\
\midrule
  0.5 & 12 & 9/12 & 0.006 / 0.011 & 8.4$\times$ \\
  2.0 & 12 & 12/12 & 0.006 / 0.012 & 9.8$\times$ \\
  4.0 & 12 & 10/12 & 0.008 / 0.011 & 8.8$\times$ \\
\midrule
  \textbf{all} & 36 & \textbf{31/36} & -- & \textbf{8.5$\times$} \\
\bottomrule
\end{tabular}
\end{table}

\paragraph{Computational efficiency.}
CRM and CRM-CAM complete in $0.5$~s, versus $7.0$~s for PSM ($14\times$),
$21.9$~s for CEM ($44\times$), and $94.9$~s for FLAME-lite ($190\times$).

\subsection{Robustness Across Confounding Strengths and Sample Sizes}\label{sec:criteo-alpha}

\begin{table}[htbp]\centering\small
\setlength{\tabcolsep}{3pt}\renewcommand{\arraystretch}{1.05}
\caption{Criteo robustness with the \emph{collision-free} sorted-score PSM
($n_T=30{,}000$; 3 seeds; means).
All PSM comparisons in this corrected grid use \texttt{psm\_sorted\_match}.}
\begin{tabular}{l rrr rrr rrr}
\toprule
& \multicolumn{3}{c}{$\alpha=0.5$} & \multicolumn{3}{c}{$\alpha=2.0$} & \multicolumn{3}{c}{$\alpha=4.0$} \\
\cmidrule(lr){2-4}\cmidrule(lr){5-7}\cmidrule(lr){8-10}
\textbf{Method} & $|$B$|$/SE & SMD & Ret. & $|$B$|$/SE & SMD & Ret. & $|$B$|$/SE & SMD & Ret. \\
\midrule
CRM $k=1$ & 12.1 & 0.009 & 100.0\% & 15.7 & 0.007 & 100.0\% & 6.8 & 0.009 & 100.0\% \\
PSM sorted & 14.2 & 0.017 & 100.0\% & 24.3 & 0.016 & 100.0\% & 13.7 & 0.018 & 100.0\% \\
\bottomrule
\end{tabular}
\label{tab:criteo_alpha}
\end{table}

We reran the full Criteo scaling grid with the collision-free sorted-score
PSM implementation: $n_T \in \{10{,}000,\,30{,}000,\,100{,}000,\,
200{,}000\}$, $n_C=2{,}000{,}000$, $\alpha\in\{0.5,2.0,4.0\}$, and
three seeds. The corrected run removes the earlier retention artifact:
PSM sorted now retains essentially all treated units, matching CRM's
near-complete retention. The substantive conclusion is therefore sharper and
more conservative: CRM is not advantaged because PSM accidentally drops
treated units; instead, CRM reaches comparable or better balance on this
grid while using a centroid-referenced computation that is much faster. At
the headline $n_T=30{,}000$ setting in \Cref{tab:criteo_alpha}, CRM has
lower mean MaxSMD at all three confounding levels and lower mean
$|\mathrm{Bias}|/\mathrm{SE}$ than sorted PSM.
\Cref{fig:criteo} displays the full corrected scaling curves.

\begin{figure}[htbp]
\centering
\includegraphics[width=\linewidth]{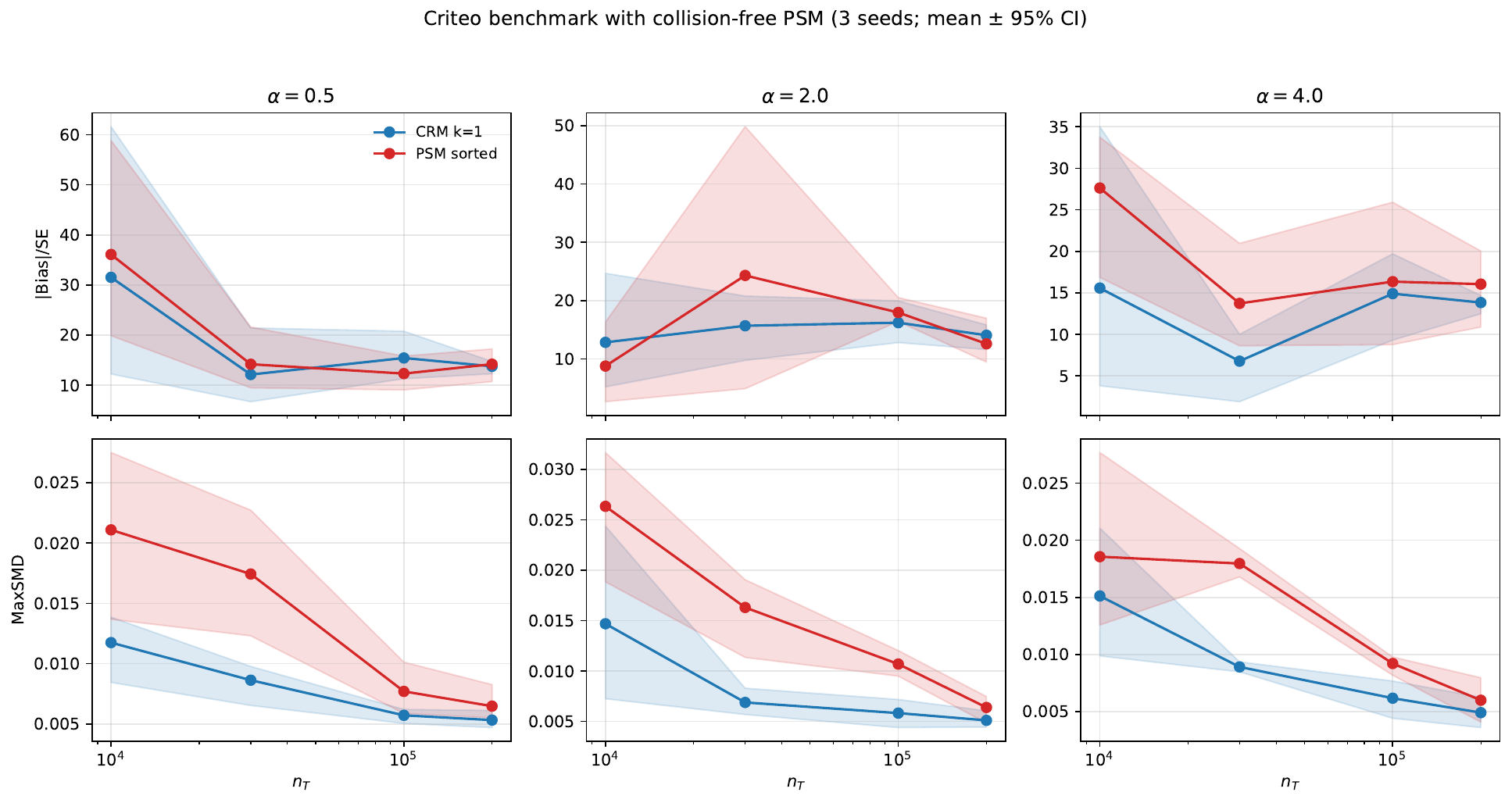}
\caption{Criteo large-scale benchmark ($n_C=2{,}000{,}000$; 3 seeds;
mean $\pm$ 95\% CI\@).
\textbf{Top row (A--C):} ATE estimation error ($|\mathrm{Bias}|/\mathrm{SE}$)
vs.\ $n_T$ at mild ($\alpha=0.5$), moderate ($\alpha=2.0$), and strong
($\alpha=4.0$) confounding.
\textbf{Bottom row (D--F):} Max SMD vs.\ $n_T$ at the same three levels
using the corrected collision-free sorted-score PSM baseline.
CRM and PSM sorted both retain nearly all treated units; CRM's advantage is
that it obtains comparable or lower imbalance without pairwise nearest-neighbor
search, so the robust differentiator remains runtime (\Cref{fig:runtime}).}
\label{fig:criteo}
\end{figure}

\section{NHANES Real-Data Application}\label{sec:nhanes}

\subsection{Data and Setting}\label{sec:nhanes-data}

We apply CRM, CRM-CAM, and propensity-score matching to a real public NHANES
2017--2018 extract \citep{nhanes2020}.
Treatment is daily cigarette smoking ($n_T=668$); controls are non-daily
smokers and non-smokers ($n_C=4{,}200$).
The outcome is systolic blood pressure (mmHg).
The 13 pretreatment covariates are age, sex, BMI, HDL cholesterol,
diabetes, hypertension, physical activity, alcohol use, education,
income-to-poverty ratio, and three race/ethnicity indicators.

\subsection{Pre-Matching Diagnostic}\label{sec:nhanes-diag}

The real NHANES extract does not exhibit the sharp alcohol boundary used in
the earlier diagnostic illustration.
The largest pre-matching imbalance is income-to-poverty ratio
(pre-match MaxSMD $=0.599$), followed by alcohol use
($\mathrm{SMD}=0.430$).
The shortage fraction is small, $\hat\pi=0.018$, indicating limited support
shortage rather than a dominant single-covariate overlap failure.
Thus the NHANES analysis is best read as a real-data sensitivity check:
it asks whether CRM behaves reasonably on a public health survey with mixed
continuous, binary, and categorical covariates, not whether a severe
structural positivity violation is present.

\subsection{Results}\label{sec:nhanes-results}

\begin{table}[htbp]
\centering
\small
\setlength{\tabcolsep}{4pt}
\renewcommand{\arraystretch}{1.08}
\caption{Real NHANES 2017--2018 smoking--blood-pressure application.
ReprSMD is the MaxSMD between the matched treated subsample and the full
treated population, measuring sample distortion.
The shortage fraction is small ($\hat\pi=0.018$), and no method estimates a
positive smoking association with systolic blood pressure on this extract.}
\begin{adjustbox}{max width=\textwidth}
\begin{tabular}{lccccc}
\toprule
\textbf{Method} & \textbf{MaxSMD} & \textbf{Retention}
  & \textbf{ReprSMD} & \textbf{ATT (mmHg)} & \textbf{95\% CI} \\
\midrule
CRM standard & 0.210 & 91.3\% & 0.071 & -1.44 & [-3.87,\ 0.53] \\
CRM-CAM      & 0.229 & 100.0\% & 0.000 & -1.86 & [-3.89,\ 0.04] \\
PSM 1:1      & 0.085 & 98.7\% & 0.054 & -1.01 & [-3.09,\ 1.07] \\
\bottomrule
\end{tabular}
\end{adjustbox}
\label{tab:nhanes}
\end{table}

\Cref{tab:nhanes} reports the application. On this real extract, PSM attains the smallest post-match MaxSMD, while
CRM and CRM-CAM retain more of the treated population.
The three point estimates are all small and negative, and all confidence
intervals include zero.
This result is useful precisely because it is less dramatic than the
synthetic stress tests: CRM's diagnostic reports only a small support
shortage, and the matched estimators agree that the extract does not provide
evidence for a positive daily-smoking association with systolic blood
pressure after adjustment.

\paragraph{Survey-weight sensitivity.}
Because NHANES is a complex survey, we also recompute ATT and balance using
the MEC exam weight \texttt{WTMEC2YR}.
The weighted results in \Cref{tab:nhanes_weighted} leave the qualitative
conclusion unchanged: the weighted ATT shifts by at most $0.62$ mmHg across
the three methods, and none of the methods yields a positive association.
Because the real extract contains only mild support shortage, we also report
a NHANES-calibrated semi-synthetic stress test in
\Cref{app:nhanes-semisynth}. That experiment uses the real NHANES covariates
as the base but imposes a known overlap violation and known treatment effect;
it is a diagnostic validation exercise, not a public-health estimate.

\begin{table}[htbp]\centering\small
\setlength{\tabcolsep}{5pt}\renewcommand{\arraystretch}{1.05}
\caption{Survey-weight sensitivity on the real NHANES extract.
ATT and MaxSMD are recomputed with the MEC exam weight \texttt{WTMEC2YR}.
The shortage fraction $\hat\pi=0.018$ is support-based and invariant to
the weights.}
\begin{adjustbox}{max width=\textwidth}
\begin{tabular}{lcccccc}
\toprule
\textbf{Method} & ATT$_{\text{unw}}$ & ATT$_{\text{wt}}$ & MaxSMD$_{\text{unw}}$ & MaxSMD$_{\text{wt}}$ & Ret. & Ret.$_{\text{wt}}$ \\
\midrule
CRM standard & -1.44 & -1.26 & 0.210 & 0.309 & 91.3\% & 89.5\% \\
CRM-CAM & -1.86 & -2.49 & 0.229 & 0.303 & 100.0\% & 100.0\% \\
PSM 1:1 & -1.01 & -0.68 & 0.085 & 0.155 & 98.7\% & 98.5\% \\
\bottomrule
\end{tabular}
\end{adjustbox}
\label{tab:nhanes_weighted}
\end{table}

\section{Minimum Reporting Standard}\label{sec:reporting}

By \Cref{thm:estimand-gap} and \Cref{cor:gap-bound}, $\hat\pi>0$ indicates
that the full treated population is not supported in the CRM grid.
We therefore recommend that matched observational analyses report six
quantities: the shortage fraction $\hat\pi$ before matching; final treated
retention $|S_M|/n_T$ after capacity constraints; the covariates
that define the unsupported subgroup $S^c$; the estimand label
(``full ATT'' only when $\hat\pi=0$ and retention is $100\%$, otherwise a
conditional ATT for the retained overlap population); any auxiliary
information about $\tau_{S^c}$; and a plain
scope statement describing the treated population to which the estimate
applies. When retention is below $1-\hat\pi$, the scope statement must also
describe capacity-driven exclusions. In the real NHANES example,
$\hat\pi=0.018$, so the support diagnostic excludes at least $1.8\%$ of
daily smokers before any additional matching attrition is considered.

\section{Discussion}\label{sec:discussion}

The empirical results support a narrow interpretation. CRM is not a
replacement for every high-quality matcher. In moderate samples, corrected
PSM and entropy balancing often achieve better marginal balance. CRM's
strength is the large-sample design regime: it avoids global pairwise
search, retains nearly all treated units when overlap is adequate, and
reports both the pre-matching shortage fraction and final retention. This is why the Criteo benchmark is the central
application, while the simulation study is used mainly to identify the
conditions under which CRM is or is not competitive.

Three limitations follow directly from the construction. First, CRM
requires representation adequacy: Assumption~\ref{ass:reprignor} is not
implied by ignorability given $X$, and nonlinear or multi-modal treatment
assignment can leave residual confounding in $Z=(d,\phi)$. Second, CRM
trades some finite-sample marginal balance for retention and scalability;
when the goal is the best possible MaxSMD in a small overlap population,
PSM or weighting may be preferable. Third, CRM's bootstrap intervals are
approximate calibration tools rather than efficiency guarantees, as
discussed below.

CRM is also compatible with outcome-modeling and doubly robust estimation.
In supplementary experiments, applying AIPW after CRM matching reduces
absolute bias by roughly a factor of four and performs similarly to a
correctly specified AIPW-OLS estimator (\Cref{fig:ipw-aipw}). We therefore view CRM as a
design-stage method that can be followed by model-based estimation, not as
a competitor to the entire class of doubly robust procedures. Cross-fitted
double/debiased machine learning estimators \citep{chernozhukov2018} are a
natural estimation-stage complement, but a full cross-fitted super-learner
benchmark is outside the matching-design comparison in the main paper.

\subsection{Uncertainty Quantification}\label{sec:uncertainty}

CRM reports paired-bootstrap uncertainty intervals as an \emph{approximate}
measure of variability (Algorithm~1, Stage~3). The paired bootstrap treats
matched pairs as i.i.d.\ and ignores uncertainty from the learned partition,
so we use it as an empirical uncertainty summary rather than a formal
post-matching variance estimator.

A calibration study across $n_T \in \{500,1{,}000,2{,}000,5{,}000,
10{,}000\}$ (100 replications each, Scenario S\ref{sc:typical}) finds
sub-nominal coverage of $0.745$--$0.870$ for the paired-bootstrap
$95\%$ interval (\Cref{fig:coverage}). The interval width shrinks with
sample size, but the nominal coverage is not guaranteed. This is consistent
with \citet{abadie2008}, who show that the nonparametric bootstrap can fail
for nonsmooth matching estimators.

To separate variance calibration from estimator bias, \Cref{app:variance}
adds an inference diagnostic with 500 replications per setting. The diagnostic
compares paired-bootstrap standard-error Wald intervals, conservative
cell-stratified Wald intervals, and outcome-model bias corrections. In three
mean-shift settings the CRM bias is small, but both bootstrap and
cell-stratified standard errors underestimate Monte Carlo variability by
roughly $20$--$25\%$. In the
covariance-driven nonlinear setting, basic CRM has large representation bias;
a linear bias correction misses it, while a quadratic correction detects and
removes it in that constructed setting. We therefore treat CRM intervals as
approximate uncertainty summaries and use the bias-correction gap as an
additional diagnostic for residual prognostic imbalance.

\paragraph{For practitioners, we recommend:}
\begin{itemize}[noitemsep]
  \item Report the paired-bootstrap interval from the software as an
        approximate uncertainty summary, not as a guaranteed nominal
        confidence interval.
  \item When CRM cells are central to the analysis, also report the
        conservative cell-stratified standard error in \Cref{app:variance};
        it is a sensitivity calculation conditional on the learned partition.
  \item Use the bias-correction gap in \Cref{app:variance} to audit whether
        the CRM representation leaves prognostic imbalance. A large gap
        indicates that the point estimate is sensitive to outcome modeling.
  \item Do \textbf{not} interpret these intervals as semiparametric
        efficiency-optimal intervals. The $O(n_T^{-1/2})$ rate result
        (Proposition~\ref{prop:rate}) concerns MSE convergence in fixed
        representation dimension, not Cramér--Rao efficiency.
\end{itemize}

Fully valid post-matching inference for CRM remains a theoretical problem:
it must account simultaneously for the learned partition, nonsmooth matching,
finite-cell sampling, and possible representation bias.

\begin{figure}[htbp]
\centering
\includegraphics[width=0.62\linewidth]{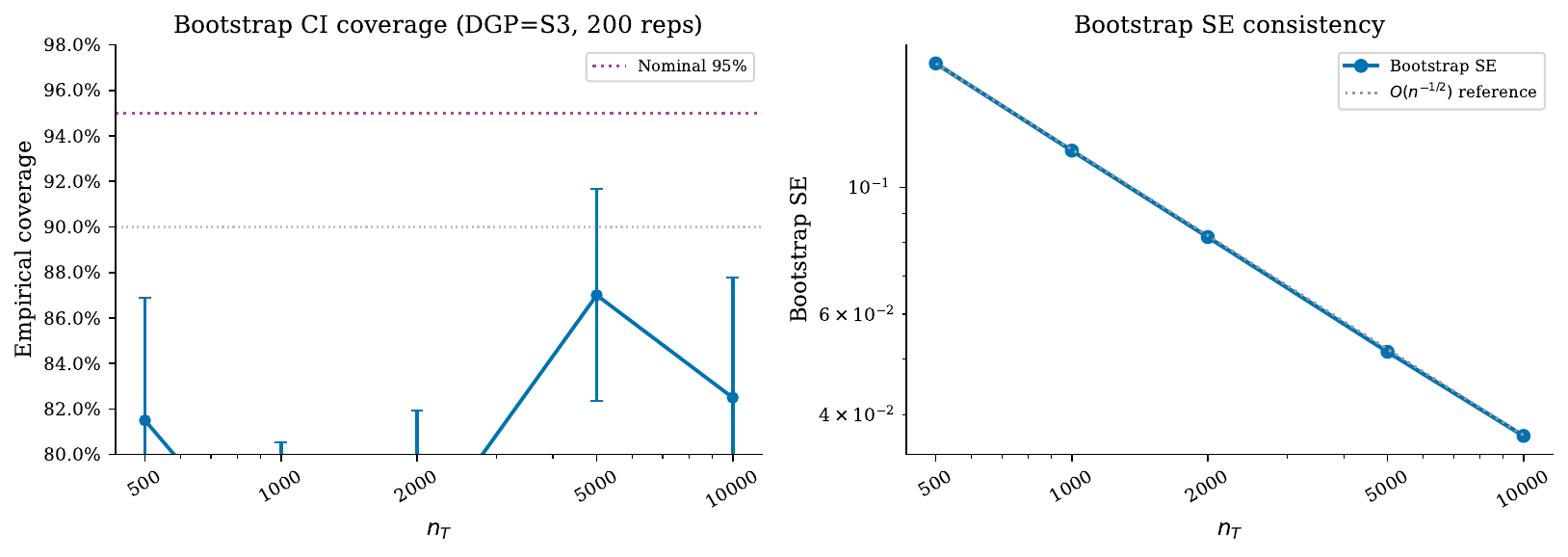}
\caption{Empirical coverage of the paired bootstrap $95\%$ interval vs.\
$n_T$ (Scenario S\ref{sc:typical}; 100 replications).
Coverage stays in the $0.745$--$0.870$ band, below nominal, while the
interval width (right axis / annotation) shrinks correctly with $n_T$.
The gap reflects the nonstandard calibration of nonsmooth matching
estimators \citep{abadie2008}; intervals are reported as empirical
calibration, not guaranteed coverage.}
\label{fig:coverage}
\end{figure}

\paragraph{Standard causal assumptions.}
CRM inherits the standard assumptions: no unmeasured confounding
(\Cref{ass:ignor}), no interference (\Cref{ass:sutva}), and consistency
of potential outcomes.

\subsection{Future Extensions}\label{sec:future}

Several extensions are natural but outside the present contribution. First,
the treated mean could be replaced by a robust reference location, such as a
trimmed centroid or geometric median, to improve stability under heavy-tailed
covariates. A multi-centroid version could similarly use treated clusters or
medoids as local reference objects, addressing multimodal treated
distributions while preserving CRM's reference-based matching structure.
Second, variance-level or interaction-based confounding could be addressed
by adding a covariance-direction coordinate to the current mean-shift Fisher
coordinate. Third, a fully design-based variance estimator could replace the
approximate bootstrap used here. Fourth, a capped CRM-NN implementation
could retain nearest-neighbor refinement in small cells while bounding
large-scale overhead.

\subsection{Software and Reproducibility}\label{sec:software}

CRM, all variants (CRM-random, CRM-NN, CRM-RunFast), and the pre-matching
diagnostic are implemented in the Python package \texttt{crm-matching}.
The public repository is available at
\url{https://github.com/KemingHu-D/crm-matching}.

A reproducibility supplement (ZIP, $\leq$100~MB) is included as supplementary
material on OpenReview. It contains all scripts, configuration files, and the
\texttt{crm} package required to reproduce every figure and table from raw
data.
The deterministic reproduction path is:
\begin{enumerate}[noitemsep]
  \item \texttt{pip install -r requirements.txt}
  \item \texttt{pytest tests/} \hfill (\textless{}2~min; all tests pass)
  \item \texttt{python experiments/run\_simulation.py --fast} \hfill (smoke-test; \textless{}5~min)
  \item \texttt{python experiments/run\_simulation.py} \hfill (full; $\approx$2~h)
\end{enumerate}
All random seeds are passed explicitly; no global state is used.
Replication $i$ uses seed $\mathtt{seed\_base} + 1000i$, matching the
paper tables exactly.

\section{Conclusion}\label{sec:conclusion}

We have introduced Centroid-Referenced Mahalanobis Matching (CRM), a
representation-based design for causal inference in large observational
studies.
By replacing pairwise unit-level matching with stratified distributional
matching over a two-dimensional geometric summary $(d,\phi)$, CRM avoids
the $O(n_Tn_C)$ pairwise-search term; its implemented cost is
$O(np^2+p^3+n\log n)$, simplifying to $O(np^2+n\log n)$ when $n\geq p$,
and it has an
$O(n_T^{-1/2})$ MSE bound in fixed representation dimension.

In three of the four simulation designs, adding the Fisher coordinate
substantially reduces the imbalance of radial-only CRM. These experiments
support $k=1$ as the default for the tested mean-shift settings, without
claiming that one direction is sufficient under arbitrary assignment.
The Criteo benchmark demonstrates two complementary findings: MaxSMD is
an insufficient proxy for estimation accuracy (CEM's near-zero MaxSMD
coexists with the highest estimation error), and CRM matches or improves on
the balance of a correctly implemented (collision-free) PSM across the
large-scale configurations while retaining at least $99.4\%$ of treated units
and running roughly an order of magnitude faster.
The LaLonde application \citep{lalonde1986} illustrates CRM-NN in a small
sample, reducing bias from $-\$908$ to $+\$135$ while retaining more
treated units than full Mahalanobis NN.
The NHANES application provides a real-data check: the pre-matching shortage
diagnostic reports only a small support restriction, while survey weighting
does not materially change the smoking--blood-pressure estimate.

As observational datasets continue to grow in scale and complexity, methods
that jointly address computational scalability, transparent estimand
definition, and transparent reporting of overlap limitations will become
increasingly important.
CRM represents a principled step in this direction.

\subsubsection*{Broader Impact Statement}
CRM is a methodological contribution to statistical causal inference.
Matching methods are widely used in policy evaluation, medical research,
and social science.
CRM's pre-matching shortage diagnostic explicitly surfaces estimand
restrictions that are implicit in all matching methods, promoting more
transparent reporting and reducing the risk of overgeneralizing causal findings.
The pre-matching diagnostic also helps practitioners identify when
additional data collection is needed rather than proceeding with a
structurally deficient analysis.
No direct negative societal impacts are anticipated from this methodological
work, beyond the standard risks that any causal inference method carries
when applied to observational data that may contain unmeasured confounding.

\subsubsection*{Acknowledgments}
We thank the TMLR action editor and reviewers for detailed comments that
substantially improved the organization, empirical evaluation, and
presentation of the paper. K.H. thanks Gregory Becker for general
professional support.

\bibliography{crm_refs}
\bibliographystyle{tmlr}

\appendix

\section{Balance--Retention Frontier}\label{app:frontier}

\Cref{fig:frontier} displays the two components of CRMSE without imposing
equal weighting, in the spirit of the balance--sample-size frontier
\citep{king2017}. In S1--S3, entropy balancing or collision-free PSM lies
closer to the ideal corner than CRM $k=1$. In S4, no method achieves both
adequate balance and high retention. This figure documents the finite-sample
tradeoff and should not be read as evidence that CRM dominates competing
methods. CRM's empirical case instead rests on its support diagnostic and its
large-sample computational behavior.

\begin{figure}[htbp]
\centering
\includegraphics[width=\linewidth]{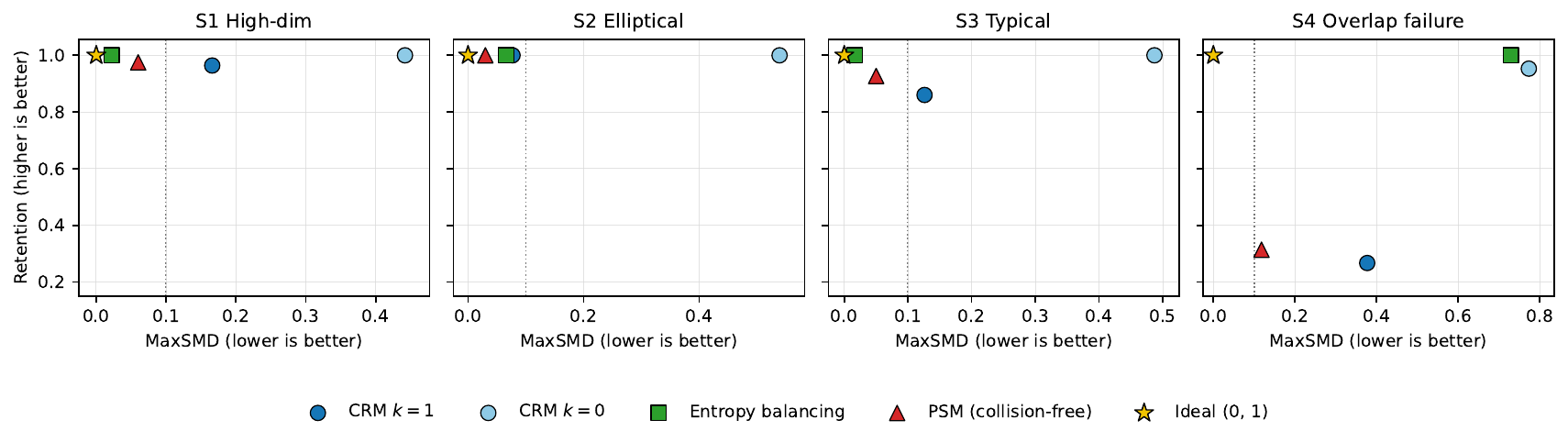}
\caption{MaxSMD and retention across the four simulation scenarios at
$n_T=1{,}000$. Each point is a method, and $(0,1)$ is the ideal of perfect
balance and full retention. The dotted line marks MaxSMD $=0.1$. The figure
separates the two components of CRMSE and shows that CRM is not uniformly
frontier-dominant in these moderate-sample settings.}
\label{fig:frontier}
\end{figure}

\section{Conservative Variance and Bias-Correction Diagnostics}
\label{app:variance}

This appendix gives the additional inference diagnostic used in
\Cref{sec:uncertainty}. It is not a new formal inferential guarantee. Its
purpose is to separate three mechanisms that can reduce coverage: variance
underestimation, bias in the CRM point estimate, and the nonsmooth form of the
matching estimator.

\paragraph{Conservative cell-stratified standard error.}
Conditional on the learned CRM partition, treat each occupied cell $k$ as a
stratum. Let
\[
\hat\tau_k=\bar Y_{T,k}-\bar Y_{C,k},\qquad
w_k=\frac{n_{T,k}}{\sum_\ell n_{T,\ell}},
\]
where $n_{T,k}$ and $n_{C,k}$ count distinct matched treated and control units
in cell $k$. The cell-stratified point estimate is
$\sum_k w_k\hat\tau_k$, and the conservative variance calculation is
\[
\widehat{\operatorname{Var}}_{\rm cell}(\hat\tau)
=
\sum_k w_k^2
\left(
\frac{s^2_{T,k}}{n_{T,k}}+
\frac{s^2_{C,k}}{n_{C,k}}
\right).
\]
For sparse cells with $n_{T,k}<2$ or $n_{C,k}<2$, the undefined within-cell
variance is replaced by the pooled within-cell variance from non-sparse cells
on the same side. This mirrors variance calculations for subclassification and
coarsened exact matching \citep{rosenbaum1984,iacus2012}, but here it is used
only as a conservative sensitivity calculation conditional on CRM's learned
cells.

\paragraph{Outcome-model bias correction.}
To diagnose residual prognostic imbalance, fit a control-outcome model
$\hat\mu_0(x)$ on the control group, following the bias-corrected matching
logic of \citet{abadie2011}. We write BC for bias correction and compute
\[
\hat\tau_{\rm BC}
=
\hat\tau_{\rm CRM}
-
\sum_k w_k
\left\{
\bar{\hat\mu}_{0,T,k}
-
\bar{\hat\mu}_{0,C,k}
\right\}.
\]
The correction term is a bias audit: a large value means that treated and
matched control units still differ in predicted untreated outcomes. We report
both a linear correction and a quadratic correction. The quadratic correction
is not part of the basic CRM matching design; it is included to test whether a
flexible outcome model can detect covariance-driven nonlinear confounding.

\paragraph{Coverage diagnostic.}
We run 500 replications in four inference-diagnostic settings, denoted I1--I4
to avoid confusion with the main simulation labels: I1 is a linear
high-dimensional mean-shift setting, I2 is a correlated elliptical mean-shift
setting, I3 is a covariance-driven nonlinear failure case, and I4 is a
multi-axis mean-shift setting. All rows in \Cref{tab:inference-diagnostic}
use Wald intervals so that the bootstrap and conservative standard-error
estimates are compared on the same scale. The software also reports percentile
paired-bootstrap intervals; neither form is claimed to have guaranteed nominal
coverage.

\begin{table}[htbp]\centering\small
\caption{Inference diagnostic coverage at nominal $95\%$ level
(500 replications per setting). ``Bootstrap-SE'' means a Wald interval using
the paired-bootstrap standard error. ``Conservative'' uses
$\widehat{\operatorname{Var}}_{\rm cell}$ above. The empirical-bias-corrected row is
simulation-only and subtracts the Monte Carlo mean bias before checking
coverage.}
\label{tab:inference-diagnostic}
\begin{tabular}{lcccc}
\toprule
Interval & I1 & I2 & I3 & I4 \\
\midrule
Paired bootstrap-SE Wald & 0.886 & 0.900 & 0.000 & 0.906 \\
Cell-stratified conservative Wald & 0.884 & 0.898 & 0.000 & 0.906 \\
Linear bias-corrected conservative Wald & 0.994 & 0.988 & 0.000 & 0.992 \\
Quadratic bias-corrected conservative Wald & 0.994 & 0.990 & 1.000 & 0.990 \\
Empirical-bias-corrected diagnostic & 0.888 & 0.898 & 0.702 & 0.900 \\
\bottomrule
\end{tabular}
\end{table}

\Cref{tab:inference-diagnostic} shows two different failure modes. In I1, I2,
and I4, CRM's average bias is close to zero, but the bootstrap and conservative
SEs are about $0.74$--$0.82$ of the Monte Carlo SD. The problem is therefore
mainly variance calibration rather than estimator centering. In I3, the raw
CRM bias is large ($-1.713$), the linear bias correction does not detect it,
and no variance estimator can restore coverage. The quadratic correction
detects almost exactly the missing prognostic structure (mean correction
$1.712$) and restores coverage in this constructed setting. Thus, the
bias-correction gap is useful as a diagnostic for representation failure, but
formal valid inference for CRM remains future work.

\paragraph{NHANES comparison.}
For the real NHANES smoking application ($n_T=668$), we report paired
bootstrap intervals as approximate empirical uncertainty summaries.
Because NHANES is a complex survey, \Cref{tab:nhanes_weighted} separately
reports survey-weighted point estimates and weighted balance; a full
design-based variance estimator with strata and primary sampling units is
outside the scope of this methodological comparison.

\section{NHANES-Calibrated Semi-Synthetic Stress Test}
\label{app:nhanes-semisynth}

The corrected real NHANES extract has only mild support shortage
($\hat\pi=0.018$), so it is not a stress test for the estimand-decomposition
diagnostic.
We therefore add a semi-synthetic experiment using the same real NHANES
covariate extract as the base, while synthetically imposing an alcohol-support
barrier and a known heterogeneous treatment effect.
This experiment is not used as evidence about smoking and blood pressure; its
purpose is to check whether the CRM diagnostic and conditional-estimand
interpretation behave as expected when the support problem is known.
\Cref{tab:nhanes_semisynth_stress} reports the result.

\begin{table}[htbp]\centering\small
\setlength{\tabcolsep}{4pt}\renewcommand{\arraystretch}{1.05}
\caption{NHANES-calibrated semi-synthetic overlap stress test.
The real NHANES covariate extract is used as the base, but alcohol support
and the outcome are synthetically generated with known truth.
The imposed shortage fraction is $\hat\pi=0.049$; the true full ATT is
$7.65$ mmHg and the true supported ATT is $7.45$ mmHg.
Bias$_{\mathrm{matched}}$ is computed relative to the true ATT of each
method's retained treated subset.}
\begin{adjustbox}{max width=\textwidth}
\begin{tabular}{lrrrrrr}
\toprule
Method & Ret. & MaxSMD & ReprSMD & ATT & True ATT$_{\mathrm{matched}}$ & Bias$_{\mathrm{matched}}$ \\
\midrule
CRM standard & 89.4\% & 0.266 & 0.174 & 7.61 & 7.53 & 0.09 \\
CRM-CAM & 99.9\% & 0.428 & 0.006 & 7.49 & 7.64 & -0.15 \\
PSM 1:1 & 94.6\% & 0.056 & 0.192 & 7.23 & 7.49 & -0.26 \\
\bottomrule
\end{tabular}
\end{adjustbox}
\label{tab:nhanes_semisynth_stress}
\end{table}

\section{Proofs}\label{app:proofs}

\subsection{Proof of Proposition~\ref{prop:rate} (Conservative Rate Bound)}

We follow the standard bias-variance decomposition for histogram regression
estimators in $d_Z = 2$ dimensions \citep{devroye1985}.

\paragraph{Bias.}
Within cell $k$ with center $z_k$ and diameter $h_n$, the cell control
mean $\hat{m}_k = |C(k)|^{-1}\sum_{j\in C(k)} Y_j(0)$ approximates
$\E[Y(0)\mid Z=z_k, T=0]$.
Under Lipschitz continuity of $m(z) = \E[Y(0)\mid Z=z]$ with constant
$L$, the within-cell approximation satisfies
$\bigl|\E[\hat{m}_k] - m(z_k)\bigr| \leq L\,h_n + |\Delta(z_k)|$.
Under \Cref{ass:reprignor}, $\Delta \equiv 0$, giving
$\mathrm{Bias}^2 = O(h_n^2)$.

\paragraph{Variance.}
Each cell contains on average $n_C h_n^{d_Z} f_Z(z_k)$ control units,
where $f_Z$ is the density of $Z(X)$.
Hence $\mathrm{Var}(\hat{m}_k) = O(1/(n_C h_n^{d_Z}))$.
The ATT estimator averages $\taucrm = |S|^{-1}\sum_{i\in S}(Y_i - \hat{m}_{k(i)})$.
The treated outcomes $Y_i(1)$ contribute $O(1/n_T)$ to variance, and
bounding the variance of the weighted average by the largest cell-mean
variance gives the conservative order
$O(1/(n_T h_n^{d_Z}))$ when $n_C\propto n_T$.
This bound does not exploit the additional variance reduction from averaging
across cells and is therefore not asserted to be sharp.

\paragraph{Balancing the bound.}
Setting $\mathrm{Bias}^2 = \mathrm{Variance}$:
$h_n^2 = 1/(n_T h_n^{d_Z})$, so $h_n^{2+d_Z} = 1/n_T$, giving
\[
h_n^* = n_T^{-1/(2+d_Z)}.
\]
Substituting into MSE $= O(h_n^{*2})$:
\[
\mathrm{MSE}(\taucrm) = O\!\left(n_T^{-2/(2+d_Z)}\right).
\]
For $d_Z = 2$: $\mathrm{MSE} = O(n_T^{-1/2})$.
For $d_Z = p$ (full-dimensional matching): $\mathrm{MSE} = O(n_T^{-2/(2+p)})$.
The ratio is $n_T^{(p-2)/(2(2+p))}$, which diverges for $p>2$. \qed

\section{PCA-Matching Sensitivity}\label{app:pca-sens}

\Cref{fig:pca-sens} reports the sensitivity of PCA-95\%+NN balance to the
variance-retention threshold and the number of retained components, the
comparison referenced in \Cref{sec:pca-empirical}.
Unlike CRM's default two-coordinate representation, PCA-matching requires a
variance-retention threshold and its balance varies
non-monotonically with that choice: retaining too few components discards
directions relevant to treatment or outcome, while retaining too many reintroduces the
high-dimensional matching problem CRM is designed to avoid.

\begin{figure}[htbp]
\centering
\includegraphics[width=\linewidth]{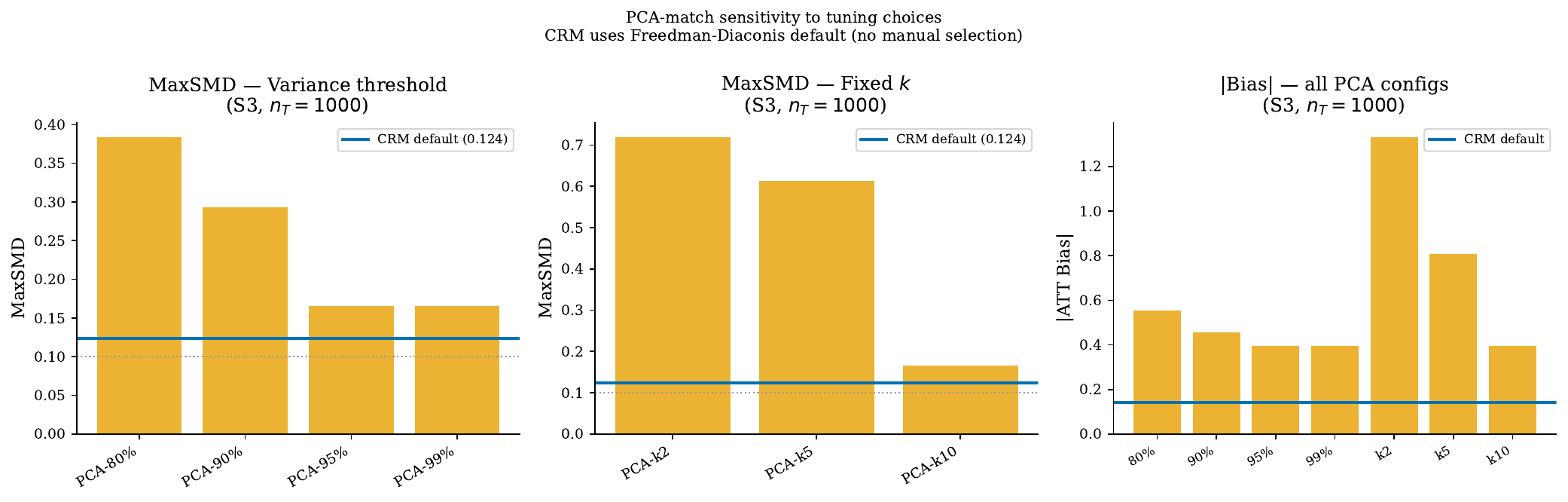}
\caption{Balance of PCA-95\%+NN as a function of the retained
variance fraction / number of principal components.
MaxSMD varies non-monotonically with the threshold, with no stable
``correct'' setting across data-generating processes; CRM has no analogous
tuning knob.}
\label{fig:pca-sens}
\end{figure}

\section{CRM-NN Timing Supplement}\label{app:crm-nn}

\begin{figure}[htbp]
\centering
\includegraphics[width=\linewidth]{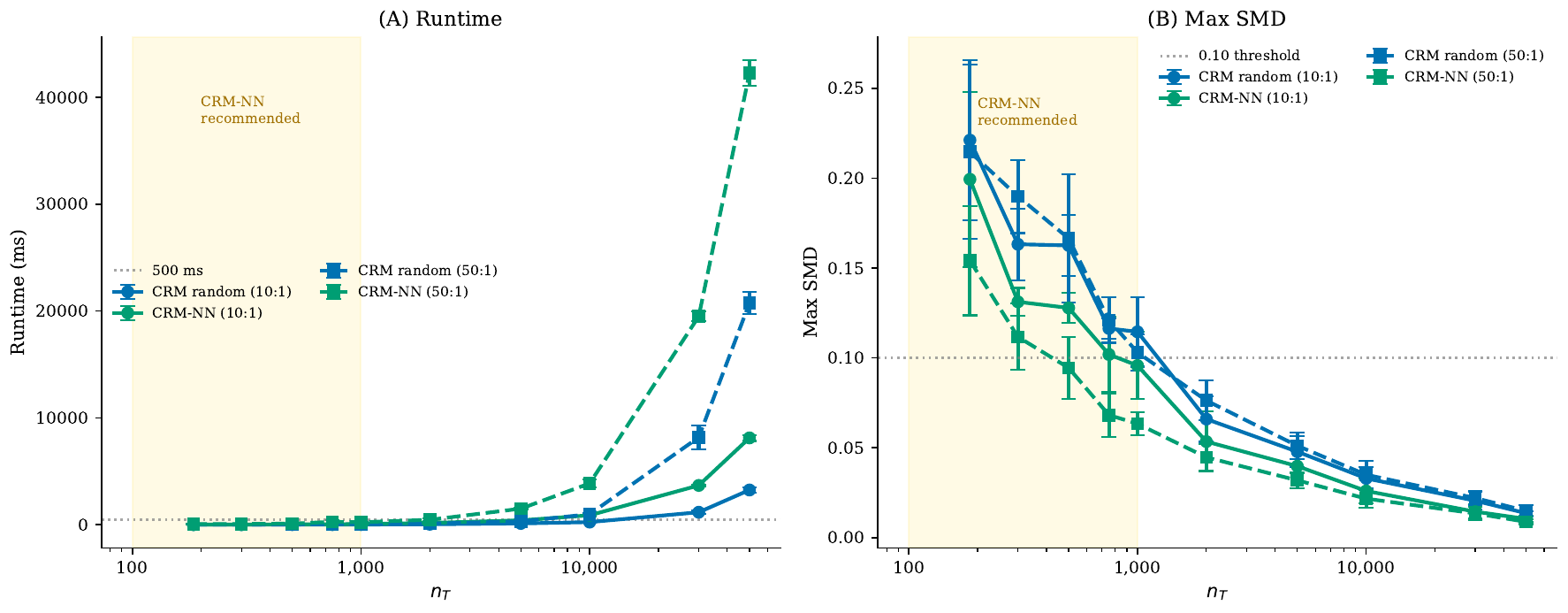}
\caption{CRM-random vs.\ CRM-NN across sample sizes
($p=8$, Scenario A, $5$ data seeds $\times$ $7$ timing replicates;
gold shading marks the CRM-NN recommended regime $n_T<1{,}000$).
\textbf{(A)}~Runtime: NN overhead stays below 500~ms for
$n_T\leq1{,}000$ at both tested control ratios, then grows
proportionally to $n_C/n_T$.
\textbf{(B)}~Max SMD: the quality gap persists at all $n_T$ because
average controls per cell grows as
$\bar{n}_C^{\mathrm{cell}} \propto n_T^{1/3}$, maintaining a pool
within each cell for NN to improve upon.}
\label{fig:crm-nn}
\end{figure}

\section{Decision Flowchart}\label{app:flowchart}

\begin{center}
\noindent\begin{minipage}{0.94\linewidth}
\small\setlength{\fboxsep}{10pt}
\fbox{\begin{minipage}{0.92\linewidth}
\textbf{Method Selection Flowchart}\\[0.5em]
\textbf{Step 1. Sample size.}
\begin{itemize}
  \item $n_T < 200$ $\to$ Use 1:1 Mahalanobis NN (cells too sparse for binning).
  \item $200 \leq n_T < 1{,}000$ $\to$ Use CRM-NN (\Cref{sec:wcnn}).
  \item $n_T \geq 1{,}000$ $\to$ Use CRM-random (\Cref{sec:estimator}).
\end{itemize}
\textbf{Step 2. Shortage fraction.}\\
Compute $\hat\pi$ via \Cref{eq:pihat} (before any matching).
\begin{itemize}
  \item $\hat\pi < 5\%$ $\to$ Full ATT approximately estimable.
  \item $5\% \leq \hat\pi < 30\%$ $\to$ Conditional ATT; report scope statement.
  \item $\hat\pi \geq 30\%$ $\to$ Severe overlap failure; expand control pool
        or restrict estimand explicitly before proceeding.
\end{itemize}
\textbf{Step 3. Confounding structure.}
\begin{itemize}
  \item Mixed continuous/binary covariates $\to$ CRM-CAM (\Cref{sec:variants}).
  \item Two independent confounding axes suspected $\to$ CRM $k=2$ (future work).
  \item Default $\to$ CRM $k=1$ (\Cref{sec:repr}).
\end{itemize}
\end{minipage}}
\end{minipage}
\end{center}

\section{CRM Variants: Full Descriptions}\label{app:variants}

\paragraph{CRM-CAM (Correlation-Adjusted Mahalanobis).}
Replaces $\SigT$ with the correlation-scale matrix
$R = D^{-1/2}\SigT D^{-1/2} + \varepsilon I_p$, where $D = \diag(\SigT)$,
placing all covariates on a unit-variance scale.
Recommended for mixed continuous/binary covariate sets where raw variance
differences would otherwise dominate the Mahalanobis geometry.

\paragraph{CRM-Pool (Pooled Bin Edges).}
Uses bin edges from the pooled $(d,\phi)$ distribution of both groups,
reducing treated attrition when the centroid shift is large and treated
units concentrate near the edge of the control support.

\paragraph{CRM-Trim (Explicit Support Trimming).}
Formally discards treated units with $d(X_i)$ outside the central $90\%$
of the control distance distribution before matching, and reports them
separately as an explicit overlap restriction with the associated estimand
caveat.

\section{Geometric Supporting Results}\label{app:geometry}

\begin{proposition}[Chi distribution of $d$]\label{prop:chi}
Let $X \mid T=1 \sim \mathcal{N}(\mu_T, \Sigma_T)$ with $\Sigma_T$
positive definite.
Then $d^2(X)\mid T=1 \sim \chi^2_p$ with $\E[d^2]=p$,
$\mathrm{Var}(d^2)=2p$, and
$\E[d(X)\mid T=1] = \sqrt{2}\,\Gamma((p+1)/2)/\Gamma(p/2)$.
\end{proposition}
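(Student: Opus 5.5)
The plan is to reduce the claim to a sum of independent squared standard normals by whitening, and then read off the moments from standard $\chi^2$ and $\chi$ facts. Here $d$ is understood at the population parameters $(\mu_T,\Sigma_T)$, i.e.\ with $\hat\mu_T,\hat\Sigma_T$ replaced by their population counterparts and $\varepsilon=0$. For the plug-in version the statement holds only approximately, and I will note this as a remark rather than prove it.

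Since $\Sigma_T$ is positive definite, it has a Cholesky factor $L$ with $\Sigma_T=LL^\top$ and $L$ invertible. Set $U=L^{-1}(X-\mu_T)$. Conditional on $T=1$, $U$ is a linear image of a Gaussian, so it is Gaussian with mean $0$ and covariance
\[
L^{-1}\Sigma_T L^{-\top}=I_p ,
\]
that is, $U\sim\mathcal{N}(0,I_p)$. Moreover $\Sigma_T^{-1}=L^{-\top}L^{-1}$, which gives
\[
d^2(X)=(X-\mu_T)^\top L^{-\top}L^{-1}(X-\mu_T)=\|U\|_2^2=\sum_{j=1}^p U_j^2 .
\]
The $U_j$ are i.i.d.\ $\mathcal{N}(0,1)$, so by definition $d^2(X)\mid T=1\sim\chi^2_p$.

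The two moments of $d^2$ follow directly. For a single coordinate, $\E[U_j^2]=1$ and $\mathrm{Var}(U_j^2)=\E[U_j^4]-1=3-1=2$. Independence of the coordinates then gives $\E[d^2]=p$ and $\mathrm{Var}(d^2)=2p$.

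For $\E[d]$, I use the $\chi^2_p$ density
\[
f(s)=\frac{s^{p/2-1}e^{-s/2}}{2^{p/2}\Gamma(p/2)}, \qquad s>0,
\]
and compute $\E[\sqrt{S}]$ for $S\sim\chi^2_p$. This integral is $\int_0^\infty s^{(p+1)/2-1}e^{-s/2}\,ds$ divided by the normalizing constant. The integral is a Gamma integral equal to $2^{(p+1)/2}\Gamma((p+1)/2)$, so
\[
\E[\sqrt{S}]=\frac{2^{(p+1)/2}\,\Gamma\bigl((p+1)/2\bigr)}{2^{p/2}\,\Gamma(p/2)}=\sqrt{2}\,\frac{\Gamma\bigl((p+1)/2\bigr)}{\Gamma(p/2)} .
\]

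None of these steps is a real obstacle; the only point needing care is the one already fixed at the start, namely that the exact $\chi^2_p$ law is for the population-parameter version of $d$.
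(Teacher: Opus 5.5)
Your proof is correct and uses the same whitening argument as the paper, which only notes that $\hat{L}^{-1}(X-\mu_T)\sim\mathcal{N}(0,I_p)$ and hence $d^2=\|Z\|_2^2\sim\chi^2_p$; you additionally derive $\E[d^2]=p$, $\mathrm{Var}(d^2)=2p$, and the Gamma-integral formula for $\E[d]$, which the paper leaves as standard facts. Your caveat that the exact $\chi^2_p$ law holds only for the population-parameter version of $d$ (population Cholesky factor, $\varepsilon=0$) is more careful than the paper, whose proof combines the sample factor $\hat{L}$ with the population mean $\mu_T$.
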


\begin{proof}
$Z = \hat{L}^{-1}(X-\mu_T) \sim \mathcal{N}(0,I_p)$, so
$d^2(X) = \|Z\|_2^2 \sim \chi^2_p$.
\end{proof}

\begin{proposition}[Independence of distance and direction]\label{prop:indep}
Under $X \sim \mathcal{N}(\mu_T, \Sigma_T)$, $d(X)$ and
$u(X) = \hat{L}^{-1}(X-\mu_T)/d(X)$ are statistically independent,
with $u(X)$ uniform on $\mathcal{S}^{p-1}$.
\end{proposition}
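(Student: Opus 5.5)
Write $Z=\hat{L}^{-1}(X-\mu_T)$. The statement is implicitly about the population whitening (or conditional on $\hat{L}$ with $\hat{L}\hat{L}^\top=\Sigma_T$), just as in \Cref{prop:chi}. Under that reading, \Cref{prop:chi} gives $Z\mid T=1\sim\mathcal{N}(0,I_p)$. By definition, $d(X)=\|Z\|_2$ and $u(X)=Z/\|Z\|_2$, which is defined almost surely because $\Prob(Z=0)=0$. The claim is therefore purely about the polar decomposition of a standard Gaussian vector in $\R^p$: the radius $R=\|Z\|$ and the direction $U=Z/\|Z\|$ are independent, and $U$ is uniform on $\mathcal{S}^{p-1}$.

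I would prove this by rotation invariance. The key steps are as follows.

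\begin{enumerate}
\item For any orthogonal $Q$, $QZ\overset{d}{=}Z$, since the density $(2\pi)^{-p/2}\exp(-\|z\|^2/2)$ depends only on $\|z\|$.
\item Because $\|QZ\|=\|Z\|$, it follows that $(R,QU)\overset{d}{=}(R,U)$ for every orthogonal $Q$.
\item Fix a Borel set $A\subseteq(0,\infty)$ with $\Prob(R\in A)>0$, and define $\nu_A(B)=\Prob(U\in B\mid R\in A)$. By step 2, $\nu_A$ is a rotation-invariant probability measure on $\mathcal{S}^{p-1}$.
\item By uniqueness of the rotation-invariant probability measure on the sphere (normalized surface/Haar measure $\sigma$), $\nu_A=\sigma$ for every such $A$.
\item Hence $\Prob(U\in B,\,R\in A)=\sigma(B)\Prob(R\in A)$ for all Borel $A,B$. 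This is exactly independence.
\item Taking $A=(0,\infty)$ shows that $U$ is uniform on the sphere.
\end{enumerate}

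As an alternative or cross-check, I would compute directly in polar coordinates. The change of variables $z=r\theta$ has Jacobian $r^{p-1}\,dr\,d\sigma(\theta)$, so the joint density factorizes as
\[
c_p\, r^{p-1}e^{-r^2/2}\,dr\times d\sigma(\theta).
\]
This simultaneously exhibits the $\chi_p$ law of $R$ (consistent with \Cref{prop:chi}) and the uniform law of $\Theta$, and the product form gives independence.

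The main subtlety is not the Gaussian computation but the role of the estimated quantities $\hat\mu_T,\hat{L}$ versus the population $\mu_T,\Sigma_T$. The statement uses $\hat{L}^{-1}$ together with the population centroid $\mu_T$. I would handle this as follows:
\begin{itemize}
\item State that the result is for the population whitening, i.e.\ with $\hat{L}$ replaced by the Cholesky factor $L$ of $\Sigma_T$, mirroring how \Cref{prop:chi} is proved.
\item Note that for the plug-in version, independence holds only approximately. When the new unit $X$ is independent of the fitting sample, conditioning on $\hat{L}$ gives $Z\mid\hat{L}\sim\mathcal{N}(0,\hat{L}^{-1}\Sigma_T\hat{L}^{-\top})$, and this covariance tends to $I_p$ as $n_T\to\infty$.
\end{itemize}

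I would also remark that the result does not require $\mu_C$ at all, matching the text's observation that the within-treated independence is unaffected by the control centroid.
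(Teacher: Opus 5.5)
Your proof is correct. The polar change of variables you give as a cross-check is exactly the paper's proof. The paper writes the joint density of $(d,u)$ with respect to $dr\,d\omega$ as $(2\pi)^{-p/2}e^{-r^2/2}r^{p-1}$ and notes that it is a function of $r$ times a constant in $\omega$.

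Your primary route is genuinely different. Instead of computing a Jacobian, you use only $QZ\overset{d}{=}Z$ for orthogonal $Q$. You push this invariance onto each conditional law $\nu_A$ of $U$ given $R\in A$, then conclude by uniqueness of the $O(p)$-invariant probability measure on $\mathcal{S}^{p-1}$. This makes explicit that the only property used is spherical symmetry of the whitened vector. Independence and uniformity therefore hold for any spherically symmetric $Z$ with $\Prob(Z=0)=0$, with or without a density. Gaussianity matters only for the $\chi_p$ law of $d$ in \Cref{prop:chi}. The cost is that you must cite the uniqueness of the invariant measure on the sphere, a Haar-measure fact that is less elementary than the polar integration formula. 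Unlike the paper's one-line computation, your route also does not deliver the density of $R$ along the way.

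Your treatment of $\hat{L}$ is more careful than the paper's. The proof of \Cref{prop:chi} simply asserts $\hat{L}^{-1}(X-\mu_T)\sim\mathcal{N}(0,I_p)$, so the statement is exact only under population whitening. One small correction to your side remark: the ridge $\varepsilon I_p$ in \Cref{eq:SigT} is fixed. So $\hat{L}^{-1}\Sigma_T\hat{L}^{-\top}$ converges to $I_p-\varepsilon(L_\varepsilon^{\top}L_\varepsilon)^{-1}$, where $L_\varepsilon L_\varepsilon^{\top}=\Sigma_T+\varepsilon I_p$, not to $I_p$ itself. The plug-in version is therefore only approximately isotropic even as $n_T\to\infty$.
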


\begin{proof}
Writing $Z = d\cdot u$ in polar coordinates, the Jacobian is $d^{p-1}$,
giving $f_{d,u}(r,\omega) = (2\pi)^{-p/2}\exp(-r^2/2)\cdot r^{p-1}$,
which factors as $f_d(r)\cdot f_u(\omega)$.
\end{proof}

\begin{proposition}[Complexity]\label{prop:complexity}
Algorithm~1 has implemented time complexity $O(np^2+p^3+n\log n)$ and space
complexity $O(p^2+n)$, where $n=n_T+n_C$.
The term $O(np^2)$ is the dominant linear-algebra cost; the
$O(n\log n)$ term comes from quantile binning and sorting.
The bound contains no $n_Tn_C$ pairwise-search term.
Brute-force nearest-neighbor matching in $p$ dimensions requires
$O(n_Tn_Cp)$ distance work, although scalar propensity-score matching can
use sorting and need not incur this bound.
\end{proposition}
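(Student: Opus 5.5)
The plan is to account for the cost of Algorithm~1 stage by stage, charging each line to a term of $O(np^2+p^3+n\log n)$, and to do the same for memory. The relevant $p\times p$ objects are $\SigT$ and $\hat{L}$, plus $O(p)$-length vectors such as $\muT$, $\muC$, $w$ and $v$. The relevant $n$-length arrays are the two coordinates $d(x_i)$ and $\phi(x_i)$, cell indices, and matched-index lists. I will treat the input covariate matrix as given, so it is not counted toward working space.

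\textbf{Stage 1.}
\begin{enumerate}
\item Computing $\muT$ and $\muC$ costs $O(np)$.
\item Forming $\SigT$ via \Cref{eq:SigT} is a sum of $n_T$ rank-one $p\times p$ outer products, so it costs $O(n_Tp^2)\subseteq O(np^2)$. It can be accumulated in place, using $O(p^2)$ space.
\item The Cholesky factorization costs $O(p^3)$.
\item Computing $w=\hat{L}^{-1}(\muC-\muT)$ is one triangular solve, $O(p^2)$, and normalizing to get $v$ costs $O(p)$.
\item For each unit $i$, one triangular solve $y_i=\hat{L}^{-1}(x_i-\muT)$ costs $O(p^2)$. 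Then $d(x_i)=\|y_i\|_2$ and $\phi(x_i)=v^\top y_i$ each cost $O(p)$. Processing units one at a time (or in fixed-size blocks) keeps extra space at $O(p)$ and gives $O(np^2)$ time in total.
\item The Freedman--Diaconis count \Cref{eq:Kd} needs the range and IQR of $d_T$. The $K_d$ and $K_\phi$ equal-frequency edges come from sorting the $n_T$ treated values on each axis, at cost $O(n_T\log n_T)$.
\item Each of the $n$ units is assigned to a cell by binary search over at most $K_d\le 200$ and $K_\phi$ edges, which is $O(\log K)$ per unit; using $K\le n$, this is within $O(n\log n)$.
\item Control counts per cell and $\hat\pi$ from \Cref{eq:pihat} are obtained in one $O(n)$ pass, for example by sorting or bucketing units on their cell key in $O(n\log n)$.
\end{enumerate}

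\textbf{Stage 2.} After grouping units by cell, each cell draws $n_k=\min\{|T(k)|,|C(k)|\}$ controls without replacement, using a partial Fisher--Yates shuffle in $O(|C(k)|)$ time. Summed over cells, this is $O(n)$. Flagging unsupported and capacity-excluded treated units is also linear in $n$.

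\textbf{Stage 3.} The estimator \Cref{eq:taucrm} costs $O(|S_M|)$. The bootstrap is a separate post-processing summary and is excluded, since it multiplies only the Stage~3 cost by $B$. Summing the stages gives $O(np^2+p^3+n\log n)$ time and $O(p^2+n)$ space.

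For the remaining claims, no step ever compares a treated unit with a control unit outside its own cell, and within a cell only random draws are made. So the total time has no $n_Tn_C$ term. For contrast, brute-force NN must evaluate all $n_Tn_C$ distances at $O(p)$ each, i.e.\ $O(n_Tn_Cp)$. Scalar PSM only needs a sort followed by a linear merge-type scan, which is why it need not incur that bound.

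The main care point is the cell-assignment and grouping step. With $K_\phi\propto K_d$ and $K_d$ uncapped, the number of joint cells could reach order $n^{2/3}$, so the implementation must not allocate a dense $K_d\times K_\phi$ count array whenever that would exceed $O(n)$ space. I will handle this by using a hash map or sorted key array over occupied cells only, of which there are at most $n$. This keeps both the $O(n\log n)$ time and $O(n)$ space claims valid without depending on the cap of 200.
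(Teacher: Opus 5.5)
Your proposal is correct and takes the same route as the paper's proof, which is exactly this stage-by-stage cost accounting: $O(n_Tp^2)$ for $\SigT$, $O(p^3)$ for the Cholesky factor, $O(np^2)$ for the per-unit whitened distances and projections, and $O(n\log n)$ for sorting and quantile binning. Your version is more thorough than the paper's, since it actually justifies the $O(p^2+n)$ space bound (by not counting the input matrix and by streaming the whitening instead of materializing an $n\times p$ array) and it covers Stages~2--3 and the pairwise-search remarks; your closing hash-map precaution is harmless but unnecessary for Algorithm~1 as stated, because the cap allows at most $200\times 50$ cells, and even uncapped, order $n^{2/3}$ cells already fit in $O(n)$ space.
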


\begin{proof}
Dominant steps: covariance estimation costs $O(n_Tp^2)$, Cholesky
factorization costs $O(p^3)$, and computing distances and projections for
all units costs $O(np^2)$.
Sorting or quantile computation for binning costs $O(n\log n)$ in the
standard implementation.
Thus total implemented time is $O(np^2+p^3+n\log n)$ and space is
$O(p^2+n)$. For the regimes considered here, $n\gg p$, so $p^3$ is
absorbed by $np^2$.
\end{proof}

\section{Relationship to Doubly Robust Estimation}
\label{app:dr-scope}

Doubly robust (DR) estimators such as augmented inverse probability weighting
(AIPW), targeted learning, and cross-fitted double/debiased machine learning
\citep{bang2005,chernozhukov2018}
address a different, estimation-stage question from matching design. They
can target full or restricted populations depending on the overlap and
weighting choices. The main experiments therefore compare matching designs
and report balance and retention, while this appendix provides a focused
IPW/AIPW sensitivity analysis based on estimation error. It is not an
exhaustive comparison of outcome learners or cross-fitting schemes.
Entropy balancing \citep{hainmueller2012} remains in the main simulation as
a scalable representative of balancing weights.
\Cref{fig:ipw-aipw} reports the resulting sensitivity comparison.

\begin{figure}[htbp]
\centering
\includegraphics[width=0.72\linewidth]{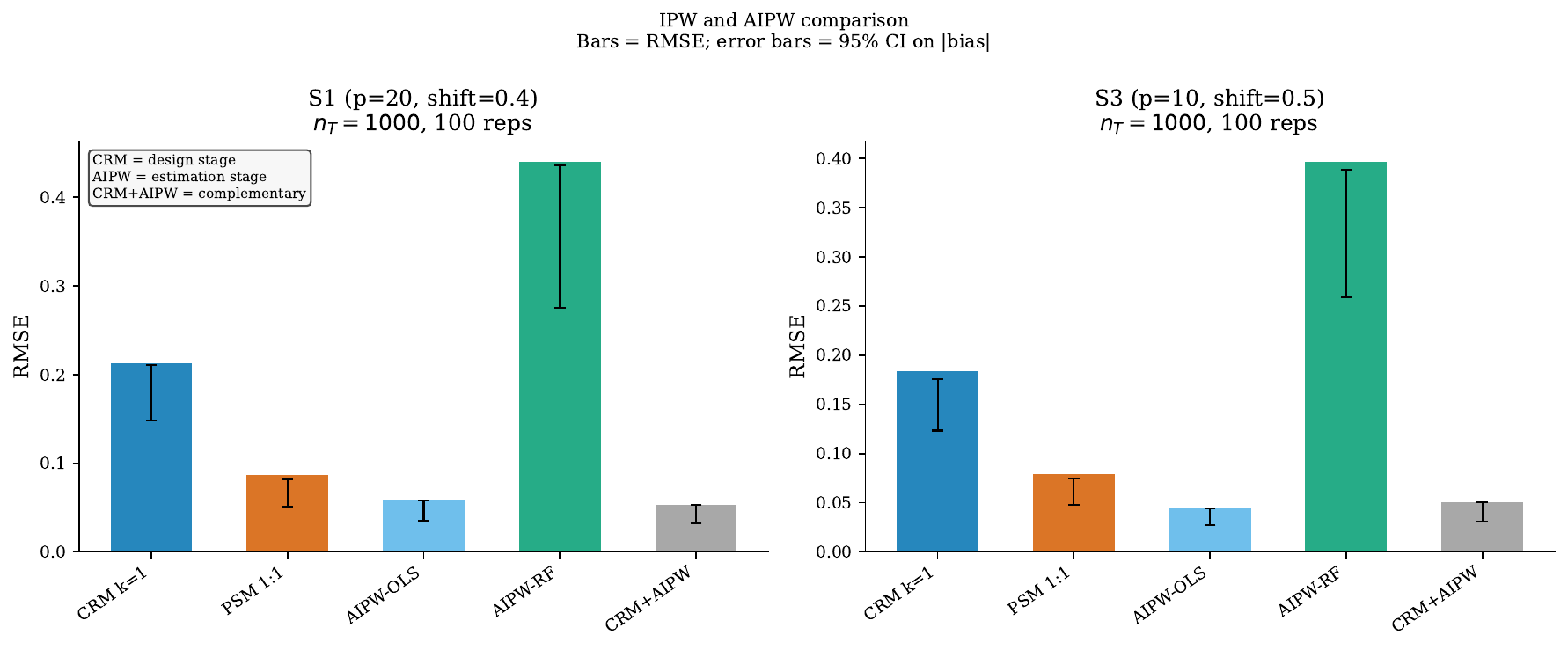}
\caption{Supplementary comparison with inverse probability weighting (IPW),
AIPW, and matching-plus-AIPW estimators. The final estimator can incorporate
CRM as a design-stage restriction before AIPW-style estimation.}
\label{fig:ipw-aipw}
\end{figure}

\section{When CRM May Underperform}\label{app:failure}

\begin{enumerate}
\item \textbf{High balance requirement at moderate $n_T$.}
In the moderate-size synthetic sweeps, PSM often achieves lower MaxSMD than
CRM (gap $0.006$--$0.133$ in the reported comparison).
If per-covariate MaxSMD below $0.05$ is the primary criterion and retention
loss is acceptable, PSM is preferable.

\item \textbf{Nonlinear confounding.}
The linear Fisher direction $v$ cannot capture interaction-based or
higher-order confounding.
PSM with a flexible treatment model or MALTS \citep{parikh2022} may achieve
lower bias in such settings.

\item \textbf{Very small $n_T$ (below 200).}
Even CRM-NN is limited when the average cell contains fewer than $2$--$3$
treated units.
Full 1:1 Mahalanobis nearest-neighbor matching is preferable in this regime.
\end{enumerate}

\end{document}